\newif\ifapx
\newcommand{\ourmaintitle}{Estimating Conditional Mutual Information for Discrete-Continuous Mixtures using Multi-Dimensional Adaptive Histograms}
\newcommand{\oururl}{\url{https://github.com/ylincen/CMI-adaptive-hist.git}}
\newcommand{\Kinit}{\ensuremath{K_{\text{init}}}\xspace}
\newcommand{\Kmax}{\ensuremath{K_{\text{max}}}\xspace}
\newcommand{\Score}{\ensuremath{\mathit{SC}}\xspace}
\DeclarePairedDelimiterX{\infdivx}[2]{(}{)}{%
  #1\;\delimsize\|\;#2%
}
\DeclareMathOperator*{\argmin}{\arg\!\min}
\newcommand{\Models}{\ensuremath{\mathcal{M}}\xspace}
\newcommand{\dX}{\ensuremath{\mathcal{X}}}
\newcommand{\dY}{\ensuremath{\mathcal{Y}}}
\newcommand{\dZ}{\ensuremath{\mathcal{Z}}}
\newcommand{\estmlp}{\ensuremath{\hat{\theta}}\xspace}
\newcommand{\Indep}{\mathop{\perp\!\!\!\perp}\nolimits}
\tikzset{node/.style={black, draw=black, circle, minimum size=0.6cm, scale=0.7}} 
\tikzset{snode/.style={black, draw=black, fill=lightgray, circle, minimum size=0.6cm, scale=0.7}} 
\tikzset{dummy/.style={black, draw=black, circle, minimum size=0.55cm, scale=0.7}} 
\tikzset{latent/.style={black, draw=black, fill=lightgray, circle, minimum size=0.55cm, scale=0.7}} 
\tikzset{causes/.style={->,very thick,  color=black}} 
\tikzset{causesxor/.style={->,very thick, dashed, color=black}} 
\tikzset{causesxoro/.style={o->,very thick, dashed, color=black}} 
\tikzset{connected/.style={o-o,very thick, color=black}} 
\tikzset{connectedd/.style={o-o,very thick, dashed,  color=black}} 
\tikzset{ocauses/.style={o->,very thick,  color=black}} 
\tikzset{confounder/.style={<->,very thick,  color=black}} 
\tikzset{confounderxor/.style={<->,very thick, dashed, color=black}} 
\tikzset{confounderl/.style={<->,very thick,  color=black, bend left=45}} 
\tikzset{confounderr/.style={<->,very thick,  color=black, bend right=45}} 
\tikzstyle{flatlabel}  = [above, font = \tiny, inner sep = 1pt, text = black]
\tikzstyle{flatlabelb}  = [below, font = \tiny, inner sep = 1pt, text = black]
\tikzstyle{slopelabel}  = [sloped, above, font = \tiny, inner sep = 1pt, text = black]
\tikzstyle{slopelabelb}  = [sloped, below, font = \tiny, inner sep = 1pt, text = black]
\begin{document}
\setlength{\pdfpagewidth}{8.5in}
\setlength{\pdfpageheight}{11in}

\title{\ourmaintitle}
\author{Alexander Marx \thanks{equal contribution (ordered alphabetically)} \thanks{CISPA Helmholtz Center for Information Security and Max Planck Institute for Informatics, Saarland University. amarx@mpi-inf.mpg.de}
\and Lincen Yang \footnotemark[1]  \thanks{Leiden Institute of Advanced Computer Science, Leiden University. $\{$l.yang,m.van.leeuwen$\}$@liacs.leidenuniv.nl} \and Matthijs van Leeuwen \footnotemark[3]}

\date{}

\maketitle
   
\begin{abstract}
\small\baselineskip=9pt Estimating conditional mutual information (CMI) is an essential yet challenging step in many machine learning and data mining tasks. Estimating CMI from data that contains both discrete and continuous variables, or even discrete-continuous mixture variables, is a particularly hard problem. In this paper, we show that CMI for such mixture variables, defined based on the Radon-Nikodym derivate, can be written as a sum of entropies, just like CMI for purely discrete or continuous data. Further, we show that CMI can be consistently estimated for discrete-continuous mixture variables by learning an adaptive histogram model. In practice, we estimate such a model by iteratively discretizing the continuous data points in the mixture variables. To evaluate the performance of our estimator, we benchmark it against state-of-the-art CMI estimators as well as evaluate it in a causal discovery setting.

\end{abstract}

\section{Introduction}
\label{sec:intro}

In many research areas, such as classification~\cite{lee:13:classification}, feature selection~\cite{vinh:14:chisqcorrection}, and causal discovery~\cite{spirtes:00:book}, estimating the strength of a dependence plays a key role. A theoretically appealing way to measure dependencies is through mutual information (MI) since it has several important properties, such as the chain rule, the data processing inequality, and---last but not least---it is zero if (and only if) two random variables are independent of each other~\cite{cover:12:elements}. For structure identification, such as causal discovery, conditional mutual information (CMI) is even more interesting since it can help to distinguish between different graph structures. For instance, in a simple Markov chain $X \to Z \to Y$, $X$ and $Y$ may be dependent, but are rendered independent given $Z$. Vice versa, a collider structure such as $X \to Z \leftarrow Y$ may introduce a dependence between two marginally independent variables $X$ and $Y$ when conditioned on $Z$.

While estimating (conditional) mutual information for purely discrete or continuous data is a well-studied problem~\cite{cover:12:elements,darbellay:99:adaptive-partitioning,gao:16:kernelandnn,han:15:3h,paninski:08:kernel}, many real-world settings concern a mix of discrete and continuous random variables, such as age (in years) and height, or even random variables that can individually consist of a \emph{mixture} of discrete and continuous components. 
Although several discretization-based approaches that can estimate MI for a mix of discrete and continuous random variables have recently emerged~\cite{cabeli:20:mixedsc,mandros:20:fdmixed,suzuki:16:estimator}, so far only methods based on $k$-nearest neighbour ($k$NN) estimation were shown to work on \emph{mixed variables}, which may consist of discrete-continuous mixture variables \cite{gao:17:mixture,mesner:19:ms,rahimzamani:18:ravk}.

Regardless of the success of $k$NN-based estimators, discretization-based approaches have attractive properties, e.g., with regard to global interpretation. That is, a natural and understandable way to discretize a continuous random variable is via creating a histogram model, where we cut the sample space of the continuous variable in multiple non-overlapping parts called bins~\cite{scott2015multivariate}, or (hyper)rectangles for multi-dimensional variables. Within a bin, we consider the distribution to be constant, which allows us to estimate the density function via Riemann integration by making the bins smaller and smaller~\cite{cover:12:elements}. This definition, however, is less straightforward when mixed variables are involved.

In this paper, we approach the problem as follows: we first extend the definition of entropy for a univariate discrete-continuous mixture variable given by Politis~\cite{politis1991entropy} to multivariate variables. Using this definition, we show that CMI for mixed random variables can be written as a sum of entropies that are well-defined through the Radon-Nikodym derivate (see Sec.~\ref{sec:entropy-mixtures}). Exploiting this property, we propose a consistent CMI estimator for such data that is based on adaptive histogram models in Sec.~\ref{sec:histograms}. To efficiently learn adaptive histograms from data, in Sec.~\ref{sec:theory} we define a model selection criterion based on the minimum description length (MDL) principle~\cite{grunwald:07:book}. Subsequently, we propose an iterative greedy algorithm that aims to obtain the histogram model that minimizes the proposed MDL score in Sec.~\ref{sec:algo}. We discuss related work in Sec.~\ref{sec:related} and in Sec.~\ref{sec:exps}, we empirically show that our method performs favourably to state-of-the-art estimators for mixed data and can be used in a causal discovery setting.

\section{Entropy for Mixed Random Variables}
\label{sec:entropy-mixtures}

We consider multi-dimensional \emph{mixed random variables}, of which any individual dimension can be discrete, continuous, or a discrete-continuous mixture. Further, we call a vector of such mixed random variables a \emph{mixed random vector}. 
For a mixed random vector $(X,Y)$, where $X$ and $Y$ are possibly multivariate, we need to adopt the most general definition of mutual information (MI), i.e., the measure-theoretic definition:
\[
	I(X;Y) = \int_{\dX \times \dY} \log \frac{dP_{XY}}{dP_{X} P_{Y}} dP_{XY} \; ,
\]
where $dP_{XY}/(dP_{X}P_{Y})$ is the Radon-Nikodym derivative, $dP_{XY}$ the joint measure, and $P_{X}P_{Y}$ the product measure. It has been proven that $P_{X}P_{Y}$ is \emph{absolutely continuous} with respect to $P_{XY}$ \cite{gao:17:mixture}, 
i.e., $P_{XY} = 0$ whenever $P_{X}P_{Y} = 0$; and therefore, such a Radon-Nikodym derivative always exists and $I(X,Y)$ is well-defined. This measure-theoretic definition can be extended to CMI using the chain rule: $I(X;Y|Z) = I(X; \{ Y,Z \} ) - I(X;Z)$.

It is common knowledge that for purely discrete or continuous random variables, CMI can be written as a sum of entropies, i.e., $I(X;Y|Z) = H(X,Z) + H(Y,Z) - H(X,Y,Z) - H(Z)$. 
What is not clear, however, is if this formula also holds when $(X,Y,Z)$ contains discrete-continuous mixture random variables. 
We investigate this problem in two steps. We first define the measure-theoretic entropy for a (possibly multi-dimensional) discrete-continuous mixture random variable and prove it to be well-defined, though previous work claimed the opposite \cite{gao:17:mixture}. Second, using this definition, we prove that (conditional) MI for a mixed random vector can be written as the sum of measure-theoretic entropies, just like purely continuous or discrete random vectors.

\subsection{A Generalized Definition of Entropy} \label{subsec:defi_entropy}
The measure-theoretic entropy is defined only for one-dimensional random variables~\cite{politis1991entropy}. Building upon this definition, we give an explicit proof that such a one-dimensional measure-theoretic entropy is well-defined, and then extend this definition to the multi-dimensional case, which we prove is also well-defined. 

\subsubsection{Generalized One-Dimensional Entropy} \label{subsec:def_general_entropy_1d}
We start off by reviewing the existing definition for the one-dimensional case~\cite{politis1991entropy}. Given a one-dimensional random variable $X$, 
entropy $H$ is defined as 
\begin{equation} \label{eq:H1d}
	H(X) =  \int_{\mathbb{R}} \frac{dP_X(x)}{dv(x)} \log \frac{dP_X(x)}{dv(x)} dv(x),
\end{equation}
where $v(\cdot)$ is a measure defined on all one-dimensional Borel sets \cite{politis1991entropy}. If $v(\cdot)$ is the Lebesgue measure, which we denote as $u(\cdot)$, $H(X)$ becomes the differential entropy. Alternatively, if $v(\cdot)$ is a counting measure, $H(X)$ becomes the common (discrete) entropy. 

If, however, $X$ is a discrete-continuous mixture variable, $v$ is defined as follows. We split $\mathbb{R}$ into three disjoint subsets s.t. $\mathbb{R} = S_d \cup S_c \cup S_o$. First, $S_o$ is the subset of $\mathbb{R}$ on which $X$ has zero probability measure, i.e., $P_X(S_o) = 0$. Second, the set $S_d$ contains all discrete points, i.e., $S_d$ is countable and $\forall x \in S_d, P_X(x) >0$. Third, $S_c$ covers the continuous points, hence $P_X(S_c) + P_X(S_d) = 1$ and for any Borel set $A \subseteq S_c$ satisfying $u(A) = 0$, we have $P_X(A) = 0$. 
Based on these three subsets $S_d, S_c,$ and $S_o$, we can define $v$ as 
\begin{equation}
\label{eq:measurev}
v(A) = u(A \cap S_c) + |A \cap S_d| \; ,
\end{equation}
where $| A \cap S_d |$ is the cardinality of this intersection. 

To show that the generalized one-dimensional entropy is well-defined, we need to prove that the Radon-Nikodym derivative $dP_X/dv$ always exists. This we show in the following lemma.

\begin{lemma} \label{abs_cont_oned}
	Given a one-dimensional discrete-continuous random variable $X$ with probability measure $P_X$, $P_X$ is absolutely continuous w.r.t.\ $v$, i.e., $P_X = 0$ whenever $v=0$, and hence $dP_X/dv$ always exists.
\end{lemma}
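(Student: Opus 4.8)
The plan is to verify the definition of absolute continuity directly from the structure of $v$, and then invoke the Radon--Nikodym theorem. Concretely, I would take an arbitrary Borel set $A \subseteq \mathbb{R}$ with $v(A) = 0$ and show that $P_X(A) = 0$. The natural tool is the partition $\mathbb{R} = S_d \cup S_c \cup S_o$ introduced above, which lets me split $P_X(A)$ into three contributions that can be controlled one at a time.

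First I would exploit non-negativity: since $v(A) = u(A \cap S_c) + |A \cap S_d|$ is a sum of two non-negative terms, $v(A) = 0$ forces both $u(A \cap S_c) = 0$ and $|A \cap S_d| = 0$, the latter meaning $A \cap S_d = \emptyset$. Writing $A$ as the disjoint union $(A \cap S_d) \cup (A \cap S_c) \cup (A \cap S_o)$ and using countable additivity of $P_X$ gives $P_X(A) = P_X(A \cap S_d) + P_X(A \cap S_c) + P_X(A \cap S_o)$. The first term vanishes because $A \cap S_d = \emptyset$; the third vanishes because $A \cap S_o \subseteq S_o$ and $P_X(S_o) = 0$ by definition of $S_o$; and the second vanishes because $A \cap S_c$ is a Borel subset of $S_c$ with $u(A \cap S_c) = 0$, so the defining property of $S_c$ (that Lebesgue-null Borel subsets of $S_c$ are $P_X$-null) applies. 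Hence $P_X(A) = 0$, establishing $P_X \ll v$.

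To conclude that $dP_X/dv$ exists I would invoke the Radon--Nikodym theorem, whose hypothesis requires $v$ to be $\sigma$-finite (the probability measure $P_X$ is automatically finite, hence $\sigma$-finite). This is where I would be most careful: I would verify $\sigma$-finiteness by covering $\mathbb{R}$ with countably many sets of finite $v$-measure---for instance the atoms comprising $S_d$ (each a singleton of $v$-measure one, and countably many since $S_d$ is countable), the sets $[n,n+1) \cap S_c$ for $n \in \mathbb{Z}$ (each of finite Lebesgue, hence finite $v$, measure), and $S_o$ (of $v$-measure zero). A secondary point worth stating explicitly is that $S_d$, $S_c$, and $S_o$ are assumed Borel, so that $A \cap S_c$ is Borel and the defining property of $S_c$ can be applied legitimately.

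I do not expect a genuine obstacle here: the argument is short, and its only real content is recognizing that the defining property of $S_c$ is \emph{exactly} the absolute continuity of the continuous part of $P_X$ with respect to the Lebesgue measure, so that $v$ dominates $P_X$ piecewise by construction. The most delicate administrative steps are thus confirming $\sigma$-finiteness of $v$ and the measurability of the partition, rather than any deep estimate.
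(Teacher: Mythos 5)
Your proof is correct and follows essentially the same route as the paper's: unwind $v(A)=0$ into $u(A\cap S_c)=0$ and $A\cap S_d=\emptyset$ via non-negativity, then dispose of each piece of $P_X(A)$ using the defining properties of $S_c$, $S_d$, and $S_o$. Yours is in fact slightly more complete---the paper leaves the $P_X(A\cap S_o)=0$ step and the $\sigma$-finiteness hypothesis of the Radon--Nikodym theorem implicit, both of which you verify explicitly.
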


We provide the proof of Lemma~\ref{abs_cont_oned}, as well as for Lemmas~\ref{lemma:abs_cont} and~\ref{lemma:3H} in Supplementary Material~\ref{sec:apx-proofs}.

\subsubsection{Generalized Multi-Dimensional Entropy} \label{subsec:def_multi_general_entropy}
In the following, we extend the measure-theoretic entropy definition to a mixed $k$-dimensional random vector $W = ( W_1, \dots, W_k )$.
For each $W_i$, we define $S_d^i, S_c^i, S_o^i$ and measure $v^i$ as above, and also define the \emph{product measure} $v$ for the $k$-dimensional random vector as 
$v = v^1 \times \ldots \times v^k$. Then, define the entropy for $W$ as
\begin{equation} \label{eq:Hmd}
	H(W) = \int_{\mathbb{R}^k} \frac{dP_W(w)}{dv(w)} \log \frac{dP_W(w)}{dv(w)} dv(w).
\end{equation}
To prove that such entropy is well-defined, we show that $dP_W/dv$ always exists. 
\begin{lemma} \label{lemma:abs_cont}
	Given a mixed $k$-dimensional random vector $W = ( W_1, \dots, W_k )$ with probability measure $P_W$, 
	$dP_W/dv$ always exists.
\end{lemma}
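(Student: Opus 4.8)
The plan is to reduce the claim to an absolute-continuity statement and then invoke the Radon--Nikodym theorem. Concretely, $dP_W/dv$ exists as soon as $P_W \ll v$ and both measures are $\sigma$-finite. Since $P_W$ is a probability measure it is finite, and each $v^i$ is $\sigma$-finite (Lebesgue measure $u$ is $\sigma$-finite on $S_c^i$, and $S_d^i$ is countable so counting measure is $\sigma$-finite there), whence the product $v = v^1 \times \cdots \times v^k$ is $\sigma$-finite as well. Thus the entire lemma comes down to showing that $P_W(A) = 0$ whenever $v(A) = 0$.

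To organise this, I would partition $\mathbb{R}^k$ into the $3^k$ product blocks $B_t = S_{t_1}^1 \times \cdots \times S_{t_k}^k$ indexed by $t = (t_1, \ldots, t_k) \in \{d, c, o\}^k$, and verify $P_W(A \cap B_t) = 0$ for each block separately. Two families of blocks are immediate. First, if some $t_j = o$, then $A \cap B_t \subseteq \{ w : w_j \in S_o^j \}$, so $P_W(A \cap B_t) \le P_{W_j}(S_o^j) = 0$ by the definition of $S_o^j$; these blocks carry no joint mass at all, regardless of $v$. Second, on the purely discrete block $t = (d, \ldots, d)$ the set $B_t$ is countable and $v$ restricts to counting measure there, which vanishes only on the empty set, so $v(A \cap B_t) = 0$ forces $A \cap B_t = \emptyset$ and absolute continuity is trivial.

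The remaining blocks mix a nonempty set $C$ of continuous coordinates with a (possibly empty) set $D$ of discrete ones. Here I would condition on the countably many values the coordinates in $D$ can take: each fixed value contributes a factor of counting measure, and it then suffices to show that the conditional law of the $C$-coordinates is absolutely continuous with respect to the product Lebesgue measure $\bigotimes_{i \in C} u$. Lemma~\ref{abs_cont_oned} already supplies the one-dimensional analogue, namely $P_{W_i} \ll v^i$ for each individual $i$, and the task is to lift this to the joint continuous block.

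This last lifting is exactly where I expect the real difficulty to sit. Absolute continuity of the one-dimensional marginals does not by itself imply absolute continuity of the joint law against the product measure: a joint distribution can concentrate on a Lebesgue-null set---think of two continuous coordinates tied by $W_1 = W_2$, whose mass lives on the diagonal---while each marginal stays absolutely continuous. So the crux of the argument is to rule out such a singular-continuous component on every continuous block, i.e.\ to show that the restriction of $P_W$ to $B_t$ genuinely admits a density against $\bigotimes_{i \in C} u$ rather than merely having well-behaved marginals. Once that structural fact is secured, the pieces recombine: $P_W$ vanishes on every $v$-null set, Radon--Nikodym applies on each block and hence globally, and $dP_W/dv$ exists as claimed.
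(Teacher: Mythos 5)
Your block decomposition and the two easy families of blocks (a coordinate landing in $S_o^j$, and the purely discrete block) are handled correctly, but the proposal does not prove the lemma --- and it cannot, because the step you flag as ``the crux'' is not merely difficult, it is false in general. Your own aside contains the counterexample: take $k=2$ and $W_2 = W_1 \sim \mathrm{Unif}[0,1]$. Then $S_d^1 = S_d^2 = \emptyset$, so each $v^i$ is (a restriction of) Lebesgue measure and hence $v \le u \times u$; the diagonal $\{(x,x) : x \in [0,1]\}$ is therefore $v$-null, yet $P_W$ gives it mass one. Thus $P_W \not\ll v$ and $dP_W/dv$ does not exist. No conditioning on discrete coordinates can repair this: the ``structural fact'' you defer (absence of a singular-continuous component on each mixed block) is an additional hypothesis on the joint law, not a consequence of the lemma's assumptions, so the gap you left open is unfillable as stated.

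It is instructive to see how the paper's own proof hides this. It opens with: given a $k$-dimensional Borel set $A$, ``there exist one-dimensional Borel sets $A_1, \ldots, A_k$ such that $A = A_1 \times \ldots \times A_k$'', and then applies Lemma~\ref{abs_cont_oned} coordinatewise. That opening claim is wrong: most Borel subsets of $\mathbb{R}^k$ are not product sets, and the diagonal above is exactly such a non-product set. What the paper's argument actually establishes is that $P_W$ vanishes on every $v$-null \emph{rectangle}; but absolute continuity does not transfer from a generating family of rectangles to the whole Borel $\sigma$-algebra --- the diagonal example satisfies the rectangle condition (any null rectangle $A_1 \times A_2$ has $u(A_1) = 0$ or $u(A_2) = 0$, forcing $P_W(A_1 \times A_2) \le u(A_1 \cap A_2 \cap [0,1]) = 0$) while violating $P_W \ll v$ globally. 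So the step you honestly identified as open is precisely the step the paper silently elides; read carefully, your proposal is a refutation of Lemma~\ref{lemma:abs_cont} as stated rather than an incomplete proof of it. Repairing it requires strengthening the hypotheses (e.g., requiring that on each mixed block the law of the continuous coordinates admit a density with respect to the corresponding product Lebesgue measure), and any such strengthening would have to be propagated to Lemma~\ref{lemma:3H} and Theorem~\ref{th:convergence}, which rely on this lemma.
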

Last, based on Lemma~\ref{abs_cont_oned} and \ref{lemma:abs_cont}, we can prove that just like for a purely continuous or discrete random vector, conditional mutual information for a mixed random vector can be written as a sum of entropies. 
\begin{lemma} \label{lemma:3H}
	Given a mixed random vector $(X,Y,Z)$ with joint probability measure $P_{XYZ}$, we can write $I(X;Y|Z) = H(X,Z) + H(Y,Z) - H(Z) - H(X,Y,Z)$,
	 where each entropy can be defined as in Eq.~\eqref{eq:Hmd}.
\end{lemma}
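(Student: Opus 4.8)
The plan is to reduce the conditional statement to an unconditional one. Using the chain rule quoted above, $I(X;Y|Z) = I(X;\{Y,Z\}) - I(X;Z)$, it suffices to establish a single decomposition of (plain) mutual information into the generalized entropies of Eq.~\eqref{eq:Hmd}, namely
\[
 I(U;V) = H(U) + H(V) - H(U,V)
\]
(up to the overall sign fixed by Eq.~\eqref{eq:Hmd}), for arbitrary mixed random vectors $U,V$, and then to apply it to the pairs $(U,V)=(X,\{Y,Z\})$ and $(U,V)=(X,Z)$. Subtracting the two expansions, the marginal terms $H(X)$ cancel and one is left precisely with $H(X,Z)+H(Y,Z)-H(Z)-H(X,Y,Z)$; this final cancellation is purely algebraic, so the whole content of the lemma sits in the unconditional identity.

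To prove that identity, I would express the Radon-Nikodym derivative appearing in $I(U;V)$ relative to the common dominating product measure $v_{UV}=v_U\times v_V$. Writing $f_{UV}=dP_{UV}/dv_{UV}$, $f_U=dP_U/dv_U$, and $f_V=dP_V/dv_V$ (all of which exist by Lemma~\ref{lemma:abs_cont}), the two facts I need are: first, that the product measure factorizes, i.e.
\[
 \frac{d(P_U\times P_V)}{dv_{UV}}(u,v) = f_U(u)\,f_V(v),
\]
which follows from Tonelli's theorem applied to measurable rectangles together with the uniqueness of a measure determined on a generating $\pi$-system; and second, the Radon-Nikodym chain rule
\[
 \frac{dP_{UV}}{d(P_U\times P_V)} = \frac{f_{UV}}{f_U\,f_V}
\]
holding $P_{UV}$-almost everywhere, which is legitimate because $P_{UV}\ll P_U\times P_V$ (the absolute-continuity fact cited from Gao et al.\ in the definition of $I$) and $P_U\times P_V\ll v_{UV}$. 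Substituting this ratio into the measure-theoretic definition of $I(U;V)$ and splitting $\log\frac{f_{UV}}{f_Uf_V}=\log f_{UV}-\log f_U-\log f_V$, each term integrates against $dP_{UV}$ to one of the entropy integrals: for instance $\int \log f_U\,dP_{UV}=\int f_U\log f_U\,dv_U$ after marginalizing out $v$, which is exactly $H(U)$ up to sign, and likewise for the other two terms.

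The main obstacle will be the measure-theoretic bookkeeping rather than any single deep step. I expect the factorization of $d(P_U\times P_V)/dv_{UV}$ to be the delicate point: it relies on all the measures $v^i$ (Lebesgue on the continuous part plus counting on a countable discrete set) being $\sigma$-finite, so that finite products and the attendant Tonelli and Radon-Nikodym machinery apply, and one must check that the set $\{f_Uf_V=0\}$, where the ratio above is ill-defined, is $P_{UV}$-null (it is, again by $P_{UV}\ll P_U\times P_V$) so that it does not contribute to the integral. I would also assume the relevant entropies and mutual informations to be finite, so that the additive rearrangements do not run into $\infty-\infty$, and keep the sign in Eq.~\eqref{eq:Hmd} consistent throughout so that the four entropy terms combine with exactly the signs stated in the lemma.
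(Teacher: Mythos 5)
Your proposal is correct and follows essentially the same route as the paper's proof: reduce to the unconditional identity via the chain rule $I(X;Y|Z)=I(X;\{Y,Z\})-I(X;Z)$, establish absolute continuity of both $P_{UV}$ and $P_U\times P_V$ with respect to the dominating product measure $v_{UV}$, and then apply the Radon--Nikodym chain rule to rewrite $\log\frac{dP_{UV}}{d(P_U\times P_V)}$ as $\log\frac{f_{UV}}{f_U f_V}$ and integrate term by term. If anything, you are more explicit than the paper about the factorization $d(P_U\times P_V)/dv_{UV}=f_U f_V$ (via Tonelli and $\sigma$-finiteness) and the finiteness caveats, which the paper's proof leaves implicit.
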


As a direct implication of the above proof, it follows that mutual information can also be written as the sum of entropies, since it is a special case of CMI with $Z=\emptyset$. With this generalized definition, we can now show how to estimate CMI using adaptive histogram models.

\section{Adaptive Histogram Models}
\label{sec:histograms}
Adaptive histogram models have been thoroughly studied for continuous random variables~\cite{scott2015multivariate}; however, to the best of our knowledge, there exists no rigorous definition of histograms for mixed random variables. 
Thus, to use histogram models as a foundation to estimate the measure-theoretic (conditional) MI, we need to rigorously define histograms for mixed random variables.
We start with the one-dimensional case.

\subsection{One-Dimensional Histogram Models} \label{subsec:onedhist_defi}
A histogram model is typically defined based on a set of consecutive intervals called \emph{bins}~\cite{scott2015multivariate}. However, to deal with discrete-continuous mixture random variables, we define the set of bins, denoted as $B$, such that each bin is either an interval or a set containing only a single point. That is, $B = B' \cup B''$, where $B'$ and $B''$ are sets of subsets of $\mathbb{R}$, with $B'$ consisting of countable consecutive intervals and $B''$ consisting of countable single point sets. Last, we define the ``width" of a bin using the measure $v$ as defined in Eq.~\refeq{eq:measurev}, i.e., for a bin $B_j \in B$ we have
\begin{equation}
\label{eq:prod-measure-bins}
v(B_j) = u(B_j \cap B') + |B_j \cap B''| \; .
\end{equation}
As any $B_j \in B''$ contains only a single discrete point, $v(B_j) = 1$ for all $B_j \in B''$.

Further, we define a histogram model $M$ as a set of bins equipped with a parameter vector of length $K$, where $K = |B|$ is the number of bins. That is, a histogram model $M$ is a family of probability distributions $P_{X,\theta}$, parametrized by the vector $\theta = (\theta_1,\ldots,\theta_K)$. Each element of $\theta$ represents the Radon-Nikodym derivative (or density) of each bin. 
Note that this definition generalizes to purely continuous  random variables when $B'' = \emptyset$ and also to discrete random variables if $B' = \emptyset$. For the latter case, the histogram model degenerates to a multinomial model.

\subsection{Multi-Dimensional Histograms}\label{subsec:multihist_defi}
First, we define the set of multi-dimensional bins. For a mixed $k$-dimensional random vector $W = ( W_1,\ldots,W_k )$, 
we define the set of \emph{bins} for each $W_i$ as in Sec.~\ref{subsec:onedhist_defi}, denoted as $B^i$. Consequently, we can define a set of $k$-dimensional \emph{bins}, denoted $B$, by the Cartesian product $B = B^1 \times \ldots \times B^k$.

Since each $B^i$ is countable, $B$ is also countable, and we can hence assume $B$ is indexed by $j$. Then, we split $B$ in a similar way as in the one-dimensional case, i.e., $B = B'\cup B''$, where $B''$ contains \emph{only discrete values}. That is, for any $k$-dimensional bin $B_j \in B''$, 
each dimension of $B_j$ is a set that contains a single one-dimensional point. Note that, however, for any $B_j \in B'$, each dimension of $B_j$ can either be a one-dimensional interval or a one-dimensional single-point set.
Further, we define the volume of a multi-dimensional bin $B_j \in B$ using the product measure $v(B_j)$ (see Sec.~\ref{subsec:def_multi_general_entropy}).

Similar to one-dimensional histograms, a multi-dimensional histogram model $M$ can be described by a probability distribution $P_{W,\theta}$ parametrized by the vector $\theta = (\theta_1,\ldots,\theta_K)$, where $K$ is the number of bins and $\theta_i$ is the Radon-Nikodym derivative for each bin. 

\subsection{Maximum Likelihood Estimator}
Given a possibly multi-dimensional histogram with $K$ bins, we denote the Radon-Nikodym derivative $dP_{W,\theta}/dv$ as $f^h_{\theta}$
and its maximum likelihood estimator as $f^h_{\hat{\theta}}$. 
Observe that for any parameter $\theta_j \in \theta$, the product $\theta_j v(B_j)$ follows a multinomial distribution. Thus, given a dataset $D=\{D_i\}_{i=1,\ldots,n}$, with $D_i$ representing a row, the maximum log-likelihood is denoted as and equal to
\begin{equation} \label{eq:ml_hist}
       l_{M}(D) = \log f^h_{\estmlp(D)}(D) = \log \prod_{j = 1}^K \left( \frac{c_j}{n \cdot v(B_j)} \right)^{c_j},
\end{equation}
where $c_j$ and $v(B_j)$ are respectively the number of data points and the bin volumes of bin $j \in \{1 \ldots K\}$. Notice that this maximum likelihood generalizes to the purely discrete case (i.e., multinomial distribution) where all $v(B_j) = 1$, and to the purely continuous case~\cite{scott2015multivariate} where $v$ becomes the Lebesgue measure.

\subsection{Conditional Mutual Information Estimator}
Combining all previous theoretical discussions, we can now estimate conditional mutual information for three (possibly multivariate) random variables $X,Y$ and $Z$ by 
\begin{small}
\[
 I^h(X;Y|Z)=H^h(X,Z){+}H^h(Y,Z){-}H^h(X,Y,Z){-}H^h(Z) \; .
\]
\end{small}
The corresponding measure-theoretic entropies are estimated from $k$-dimensional data over $( X,Y,Z )$, where $k_X$, $k_Y$ and $k_Z$ are the corresponding number of dimensions of $X,Y$ and $Z$. We estimate the entropies as 
\begin{equation}
\begin{split}
	H^h(X,Y,Z) &= -\int_{\mathbb{R}^{k}} f^h_{\hat{\theta}}(x,y,z) \log (f^h_{\hat{\theta}}(x,y,z)) dv \\
	H^h(X,Z) &= -\int_{\mathbb{R}^{k_X+k_Z}} f^h_{\hat{\theta}}(x,z) \log (f^h_{\hat{\theta}}(x,z)) dv \\
	H^h(Y,Z) &= -\int_{\mathbb{R}^{k_Y + k_Z}} f^h_{\hat{\theta}}(y,z) \log (f^h_{\hat{\theta}}(y,z)) dv \\
	H^h(Z) &= -\int_{\mathbb{R}^{k_Z}} f^h_{\hat{\theta}}(z) \log (f^h_{\hat{\theta}}(z)) dv
\end{split}
\end{equation}
in which $f^h_{\hat{\theta}}(x,y,z)$ is the maximum likelihood estimator given the data, while we obtain $f^h_{\hat{\theta}}(x,z)$, $f^h_{\hat{\theta}}(y,z)$, and $f^h_{\hat{\theta}}(z)$ via marginalization from $f^h_{\hat{\theta}}(x,y,z)$.  Next, we will prove that $I^h$ is a strongly consistent estimator for conditional mutual information on mixed data.

\begin{theorem}
\label{th:convergence}
Given a mixed random vector $(X,Y,Z)$ with probability measure $P_{XYZ}$, 
\[
\lim_{v' \rightarrow 0} \lim_{n \rightarrow \infty} I^h(X;Y|Z) = I(X;Y|Z)
\]
almost surely, where $n$ refers to the sample size and $v'$ refers to the maximum of the histogram volumes for bins in $B'$ (defined in Sec.~\ref{subsec:multihist_defi}). 
\end{theorem}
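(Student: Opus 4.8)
The plan is to reduce the double limit to a statement about each entropy term and then dispatch the two limits in order. By Lemma~\ref{lemma:3H} both the true and the estimated conditional mutual information are the \emph{same} fixed linear combination of four (estimated) entropies, and since a finite linear combination of almost surely convergent sequences converges almost surely, it suffices to control $H^h(W)$ for $W$ ranging over $(X,Z)$, $(Y,Z)$, $(X,Y,Z)$ and $Z$. I would first fix the partition $B$ (equivalently $v'$) and take $n\to\infty$, and only afterwards let $v'\to 0$, exactly matching the nested limits in the statement.

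For the inner limit, write the histogram entropy as $H^h(W) = -\sum_{j=1}^{K} \frac{c_j}{n}\log\frac{c_j}{n\,v(B_j)}$. By the strong law of large numbers each empirical bin frequency satisfies $c_j/n \to P_W(B_j)$ almost surely, and since there are only finitely many bins these limits hold simultaneously on a probability-one event. Continuity of $t\mapsto -t\log\big(t/v(B_j)\big)$ (with $0\log 0=0$) then yields, almost surely, $\lim_{n\to\infty} H^h(W) = H_B(W) := -\sum_j P_W(B_j)\log\frac{P_W(B_j)}{v(B_j)}$, the plug-in entropy of the piecewise-constant density obtained by averaging $f=dP_W/dv$ (which exists by Lemma~\ref{lemma:abs_cont}) over each bin.

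The decisive observation for the outer limit is that the bin volumes cancel once the four binned entropies are recombined. Because $v$ is a product measure, expanding each $H_B$ as a discrete Shannon entropy plus expected-$\log$-volume terms and collecting them in $I_B := H_B(X,Z)+H_B(Y,Z)-H_B(X,Y,Z)-H_B(Z)$, every $\log v(B_j)$ contribution cancels, so $I_B$ equals the ordinary discrete conditional mutual information $I(X_B;Y_B\mid Z_B)$ of the \emph{quantized} variables; discrete atoms, which sit in their own singleton bins once $v'$ is small, are reproduced exactly. Applying the chain rule used to define measure-theoretic CMI gives $I_B = I(X_B;\{Y_B,Z_B\}) - I(X_B;Z_B)$, a difference of two quantized mutual informations.

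It then remains to send $v'\to 0$, and I expect this outer limit to be the main obstacle. I would invoke the classical quantization theorem (Dobrushin / Gelfand--Yaglom): for each measure-theoretic mutual information, the quantized value increases under refinement and its supremum over finite partitions equals the true value, whence $I(X_B;\{Y_B,Z_B\}) \uparrow I(X;\{Y,Z\})$ and $I(X_B;Z_B)\uparrow I(X;Z)$ as the continuous bins shrink, so $I_B \to I(X;Y|Z)$. A direct term-by-term entropy argument is tempting --- concavity of $\phi(t)=-t\log t$ gives $H_B(W)\ge H(W)$ by Jensen and monotone decrease under refinement, and $\bar f\to f$ almost everywhere by martingale convergence --- but passing this pointwise limit through the \emph{unbounded} integrand $-t\log t$ without a domination hypothesis is exactly where the difficulty lies; the volume-cancellation route avoids it, at the cost of carefully justifying that the shrinking product partitions generate the full Borel $\sigma$-algebra (so the suprema are attained) and of reconciling the almost-sure inner limit with the deterministic outer one by intersecting over a countable sequence $v'_m\to 0$.
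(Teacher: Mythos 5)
Your proposal is correct in outline and shares the skeleton of the paper's proof: (i) for a fixed partition, send $n\to\infty$ using almost-sure convergence of the empirical bin quantities; (ii) exploit the product structure of $v$ to cancel all $\log$-volume terms; (iii) send $v'\to 0$. The genuine differences are \emph{where} each step is carried out and which key lemma powers the outer limit. The paper performs the volume cancellation at the \emph{sample} level (Lemma~\ref{le:histoDiscretization}: $I^h$ equals, exactly and for every $n$, the discrete plug-in CMI of the discretized data), then cites Antos and Kontoyiannis~\cite{antos2001convergence} for the inner limit (Lemma~\ref{le:discrete-convergence}), and handles the outer limit by a \emph{term-by-term entropy} discretization argument (Lemma~\ref{le:histoToTrue}, building on Theorem 8.3.1 of~\cite{cover:12:elements}). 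You instead cancel volumes at the \emph{population} level after the inner limit, and for the outer limit you pass to the chain-rule decomposition $I_B = I(X_B;\{Y_B,Z_B\}) - I(X_B;Z_B)$ and invoke the Gelfand--Yaglom/Dobrushin quantization theorem for mutual information. That substitution is arguably a strengthening: the paper's Lemma~\ref{le:histoToTrue} is proved tersely precisely at the point you identify as the real difficulty (passing a pointwise limit through the unbounded integrand $-t\log t$ without domination), whereas monotone convergence of quantized MI sidesteps it; likewise, your remark about intersecting probability-one events over a countable sequence $v'_m \to 0$ addresses a measure-theoretic point the paper passes over in silence.

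Two caveats. First, your inner limit assumes finitely many bins, but the paper's definitions (Sec.~\ref{subsec:multihist_defi}) allow countably many: $B''$ contains one singleton per discrete support point, so a Poisson-type component yields infinitely many bins, and then SLLN-per-bin plus continuity does not justify exchanging the limit with the infinite sum. This is exactly what the citation to~\cite{antos2001convergence} supplies (almost-sure consistency of the plug-in entropy on countably infinite alphabets); either restrict to finitely many bins or replace your elementary argument by that result. Second, monotonicity of quantized MI holds under \emph{nested} refinement, and $v'\to 0$ does not force the partitions to be nested, so you genuinely need the sup-characterization together with the generating-$\sigma$-algebra argument you flagged, rather than the monotone-increase shortcut suggested by the arrows ``$\uparrow$'' in your sketch.
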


The proof is provided in Supplementary Material~\ref{sec:apx-proofs}. Informally, our proof is based on the following key aspects: 1) All volume-related terms in $I^h$ cancel out, 2) discrete empirical entropy converges to the true entropy almost surely \cite{antos2001convergence}, and 3) 
in the limit, differential entropy can be obtained by discretizing a continuous random variable into ``infinitely" small bins \cite[Theorem 8.3.1]{cover:12:elements}. Notably, the order of the double limit in Theorem~\ref{th:convergence} inherently indicates that $n$ should grow faster than the number of bins~\cite{rudin1964principles}, which is also required for histograms on purely continuous data to converge~\cite{scott2015multivariate}.

\section{Learning Adaptive Histograms from Data}
\label{sec:theory}

To efficiently estimate a histogram model that inherits the consistency guarantees from Theorem~\ref{th:convergence} we need to consider the following requirements. First of all, we need to ensure that we learn a joint histogram model over $(X,Y,Z)$. This is due to the fact that we obtain the lower-dimensional entropies such as $H^h(X,Z)$ by marginalization over the likelihood estimator $f^h_{\hat{\theta}}(x,y,z)$. If we would not learn a joint model, the volume-related terms in $H^h(X,Y,Z), H^h(X,Z), H^h(Y,Z)$, and $H^h(Z)$ would not cancel out. In addition, we need to make sure that the number of bins is in $o(n)$ and increases if we were to increase the number of samples n, while at the same time the size of the bins decreases.

One way to achieve those properties would be to fix the bin width or the number of bins depending on the number of samples. However, such an approach is not very flexible and does not allow for variable bin widths. To allow for a more flexible model, we formally consider the problem of constructing an adaptive multi-dimensional histogram as a model selection problem and employ a selection criterion based on the minimum description length (MDL) principle \cite{rissanen:78:mdl}. MDL-based model selection has been successfully used for learning one-dimensional~\cite{kontkanen:07:histo} and two-dimensional histograms~\cite{kameya2011time,yang:20:unsupervised}, demonstrating adaptivity to both local density changes and sample size. 

We now briefly introduce the MDL principle and define the MDL-optimal histogram model. Specifically, while previous work~\cite{kameya2011time,kontkanen:07:histo,yang:20:unsupervised} only considers purely continuous data (or more precisely, data with arbitrarily small precision), we apply the MDL principle to mixed-type data, based on our rigorous definition of histogram models for mixed random variables. On top of that, we empirically show that our score fulfils the desired properties---i.e. the number of bins grows as $o(n)$.

\subsection{MDL and Stochastic Complexity}
\label{sec:mdlintro}
The minimum description length principle is arguably one of the best off-the-shelf model selection criteria~\cite{grunwald:07:book}, which has been successfully applied to many machine learning and data mining tasks. The general idea is to assign a code length to data $D$ compressed by a model $M$, e.g., a histogram model. Given a collection of candidate models, denoted as $\Models$, MDL selects the model $M^*$ that minimizes the joint code length of the model and the data.
Formally, our goal is to find
\begin{equation}
	M^* = \argmin_{M \in \Models} L(D|M) + L(M),
\end{equation}
where $L(D|M)$ denotes the code length\footnote{The code length $L$ denotes the number of bits needed to describe the given object. Hence, all logarithms are to base $2$ and $0 \log 0 = 0$.} of the data given the model, while $L(M)$ denotes the code length needed to encode the model.

The optimal way of encoding data $D$ given $M$, in the sense that it will result in minimax \emph{regret}, is to use the \emph{normalized maximum likelihood (NML)} code~\cite{grunwald:07:book}. Accordingly, the code length of the data is called \emph{stochastic complexity (SC)}, which is defined as the sum of the negative log-likelihood $-l_{M}(D)$, defined in Eq.~\refeq{eq:ml_hist}, and the \emph{parametric complexity} (also called \emph{regret}) $\log R(n,K)$~\cite{grunwald:07:book}. The parametric complexity of a histogram model with $K$ bins is given by~\cite{kontkanen:07:histo, yang:20:unsupervised}
\[
R(n,K) = \sum_{c_1 + \dots + c_K = n} \frac{n!}{c_1! \cdots c_K!} \prod_{i=1}^K \left(\frac{c_i}{n}\right)^{c_i} \; ,
\] 
and can be computed in sub-linear time~\cite{mononen:08:sub-lin-stoch-comp}.

\subsection{Code Length of the Model}
Given a dataset $D$ with $n$ rows and $k$ individual columns $D^j$, we now define the model class $\Models$.
First, we create fixed bins according to $B''$ (as defined in Sec.~\ref{subsec:multihist_defi}) per discrete value that occurs in $D_j$. 
Next, we enumerate all possible bins for $B'$ with fixed precision $\epsilon$. To this end, denote the remaining non-discrete data points in $D_j$ as $D^c_j$. If $D^c_j$ is empty $D_j$ corresponds to a discrete variable and we can stop here. Otherwise, we create all possible cut points for $D^c_j$ as
$C_j^{0} = \{\min(D^c_j), \min(D^c_j) + \epsilon, \ldots, \max(D^c_j)\}$.
By selecting a subset of cut points $C_j \subseteq C_j^0$, we get a valid solution for $B'$. We can enumerate all possible segmentations by enumerating each $C_j \subseteq C_j^0$.

By repeating this process for each dimension, we obtain our model class $\Models$. 
Further, we get the code length for a model $M \in \Models$ by encoding all combinations of cut points for each dimension~\cite{kontkanen:07:histo}, i.e., 
\begin{equation}
	L(M) = \sum_{j \in \{1,\ldots,k\}} L(C_j) = \sum_{j \in \{1,\ldots,k\}} \log {|C_j^0| \choose |C_j|} \; .
\end{equation}
This completes the definition of our final optimization score $L(D|M) + L(M)$. 

To proof consistency for this score, we need to show that the number of selected bins grows at rate $o(n)$. Since the theoretical analysis is rather difficult, we instead empirically demonstrate this property for Gaussian distributed data in Supplementary Material~\ref{apx:data-generation}. In the next section, we present an iterative greedy algorithm that optimizes our MDL score.

\section{Implementation}
\label{sec:algo}

In this section, we describe our algorithm to estimate the joint entropy $H(X_1, \dots, X_k)$ for a $k$-dimensional discrete-continuous mixture random vector. 

\subsection{Algorithm}
To discretize a one-dimensional random variable $X$, we first create bins for the discrete values of $X$ and then discretize the continuous values. 
We detect discrete data points by checking if a single value $x$ in the domain $\dX$ of $X$ occurs multiple times. If a user-defined threshold $t$, e.g., $5$ is reached, we create a bin for this point. 
To discretize the remaining continuous values, we start by splitting $\dX$ into $\Kinit$ equi-width bins, which we can safely choose from the complexity class $o(\sqrt{n})$ (see Supplementary Material~\ref{apx:data-generation}). 
Using dynamic programming, we compute the variable-width histogram model $M$ that minimizes $L(D,M)$ in quadratic time w.r.t. $\Kinit$~\cite{kontkanen:07:histo}.

Since the runtime complexity to compute the optimal variable-width histogram over a multi-dimensional random variable would grow exponentially w.r.t. $k$, we opt for an iterative greedy algorithm (we provide the pseudocode in Supplementary Material~\ref{apx:algorithm}). 
We start by initializing the optimization: for every dimension, we fix bins for the discrete values and put the remaining continuous values into a single bin. 
Then, in each iteration, we compute a candidate discretization for each dimension and keep the discretization of that dimension that provides the highest gain in compression. To compute a candidate discretization for a dimension $X_j$, we extend the one-dimensional algorithm described above. That is, we determine those cut points for $X_j$ that provide the highest gain in $L(D,M)$, while keeping the bins for the remaining dimensions fixed. 
We repeat this until the maximum number of iterations $i_{\text{max}}$ is reached or we cannot further decrease $L(D,M)$.

\subsection{Complexity}

The complexity of discretizing a univariate random variable is in $\mathcal{O}(\Kmax \cdot (\Kinit)^2)$ and depends on the number of initial bins $\Kinit$ and the maximum number of bins $\Kmax$, which we typically chose as a fraction of $\Kinit$ (both in $o(\sqrt{n})$). 
In a multi-dimensional setting we have to multiply this complexity by the current domain size of the remaining variables, since we have to update each bin conditioned on those. In the worst case, this number is equal to $(\Kmax)^{k-1}$. Overall, we apply this procedure---if all variables are continuous---$i_{\text{max}} \cdot k$ times.

\section{Related Work}\label{sec:related}

We discuss related methods for adaptive histograms and (conditional) mutual information estimation.

Both theoretical properties and practical issues of density estimation using histograms have been studied for decades \cite{scott2015multivariate}. 
Various algorithms have been proposed for the challenging task of constructing an adaptive one-dimensional histogram, among which the MDL-based histogram \cite{kontkanen:07:histo} is considered to be the state-of-the-art, as it is self-adaptive to both local density structure and sample size and does not have any hyperparameters.
Learning adaptive multivariate histograms is even harder due to the combinatorial explosion of the search space. One approach is to resort to the dyadic CART algorithm \cite{klemela2009multivariate}; various methods designed for specific tasks also exist \cite{kameya2011time,weiler2007multi}. Our algorithm is similar to that of Kameya~\cite{kameya2011time}, but they only consider the two-dimensional case. 

For discrete data, (conditional) mutual information estimation is a well-studied problem~\cite{cover:12:elements,han:15:3h,marx:19:sci,paninski:03:suboptimal,valiant:11:sub-linear} and it has been shown that mutual information can be estimated using the $3H$ principle~\cite{han:15:3h}. An important observation is that for discrete data, the empirical estimator for entropy is sub-optimal~\cite{paninski:03:suboptimal}, which encouraged the design of more efficient entropy estimators with sub-linear sample complexity~\cite{han:15:3h,valiant:11:sub-linear}.

For estimating (conditional) mutual information on continuous data or a mix of discrete and continuous data, three classes of approaches exist. The first class concerns kernel density estimation (KDE) methods~\cite{gao:16:kernelandnn,paninski:08:kernel}, which perform well on continuous data; however, no KDE-based MI and CMI estimation methods exist that are designed for discrete-continuous mixture random variables. Moreover, bandwidth tuning for KDE can be computationally expensive, which becomes even worse for mixed data, as different bandwidths may be needed for discrete random variables. The second class of methods relies on $k$-nearest neighbour ($k$NN) estimates~\cite{frenzel:07:fp,kozachenko:87:firstnn,kraskov:04:ksg}, which have been established as the state of the art~\cite{gao:17:mixture,rahimzamani:18:ravk}. $k$NN approaches can be applied not only to a mix of discrete and continuous variables, but can also be used as consistent MI~\cite{gao:17:mixture} and CMI~\cite{mesner:19:ms,rahimzamani:18:ravk} estimators for discrete-continuous mixtures.
The third class of methods first discretizes the continuous random variables and then calculates mutual information from the discretized variables~\cite{cover:12:elements,darbellay:99:adaptive-partitioning,suzuki:16:estimator}. Two recent approaches based on adaptive partitioning for mixed random variables have been proposed~\cite{cabeli:20:mixedsc,mandros:20:fdmixed}. While Mandros et al.~\cite{mandros:20:fdmixed} focus on mutual information and its application to functional dependency discovery, Cabeli et al.~\cite{cabeli:20:mixedsc}, similar to us, build upon an MDL-based score to estimate MI and CMI, to which we compare in Sec.~\ref{sec:exps}. The key difference is that Cabeli et al.~\cite{cabeli:20:mixedsc} compute $I(X;Y|Z)$ as $(I(X;\{Y,Z\}) - I(X;Z) + I(Y;\{X,Z\}) - I(Y;Z))/2$ and maximize each of the four terms (with penalty terms) directly, while we first learn a joint histogram. 

To the best of knowledge, we are the first to propose a CMI estimator for discrete-continuous mixture variables based on discretization or histogram density estimation. 
Our method can consistently estimate CMI on mixed random variables containing discrete-continuous mixtures. We focus on histogram-based models instead of $k$NN estimation, since histograms are more interpretable~\cite{scott2015multivariate} and do not require tuning of the parameter $k$, which can have a large impact on the outcome.

\section{Experiments}
\label{sec:exps}

In this section, we empirically evaluate the performance of our approach. First, we will benchmark our estimator against state-of-the-art CMI estimators on different data types. After that, we evaluate how well our estimator is suited to test for conditional independence in a causal discovery setup. For reproducibility, we make our code available online.\!\footnote{\oururl}\label{fn:appendix}

\newcommand{\RAVK}{\textsc{RAVK}\xspace}
\newcommand{\MS}{\textsc{MS}\xspace}
\newcommand{\FP}{\textsc{FP}\xspace}
\newcommand{\MIIC}{\textsc{MIIC}\xspace}
\newcommand{\correction}{\mathcal{C}\xspace}

\subsection{Mutual Information Estimation}

On the mutual information estimation task, we compare our approach to the state-of-the-art MI estimators. In particular, we compare against \FP~\cite{frenzel:07:fp}, \RAVK~\cite{rahimzamani:18:ravk} and \MS~\cite{mesner:19:ms}, which all rely on $k$NN estimates, and \MIIC~\cite{cabeli:20:mixedsc}, which is a discretization-based method. All of those can be applied to our setup, but only the authors of \RAVK and \MS specifically consider discrete-continuous mixture variables.
We apply \MIIC using the default parameters and use $k=10$ for all $k$NN-based approaches.\!\footnote{We evaluated all approaches with $k=5,10,20$. Since $k=10$ had the best trade-off and is close to $k=7$ as used by Mesner and Shalizi~\cite{mesner:19:ms}, we report those results.} For our algorithm, we set the maximum number of iterations and the threshold to detect discrete points in a mixture variable to $5$, set $\Kinit = 20 \log n$ and $\Kmax = 5 \log n$.
To comply with the literature, we compute all entropies in this section using the \emph{natural logarithm}. 

\subsubsection{Experiment I-IV}

As a sanity check, we start with an experiment on purely continuous data.
That is, for \textbf{Experiment I}, let $X$ and $Y$ be Gaussian distributed random variables with mean 0, variance 1, and covariance $0.6$. Consequently, the correlation $\rho$ between $X$ and $Y$ is $0.6$ and true MI can be calculated as $I(X;Y) = - \frac{1}{2} \log(1-\rho^2)$. In \textbf{Experiment II}, $X$ is discrete and drawn from $\text{Unif}(0,m-1)$, with $m=5$ and $Y$ is continuous with $Y \sim \text{Unif}(x,x+2)$ for $X=x$. Therefore, $I(X;Y) = \log(m) - \frac{(m-1) \log 2}{m}$~\cite{gao:17:mixture}. Next, for \textbf{Experiment III}, $X$ is exponentially distributed with rate $1$ and $Y$ is a zero-inflated Poissonization of $X$---i.e., $Y=0$ with probability $p = 0.15$ and $Y \sim \text{Pois}(x)$ for $X=x$ with probability $1-p$. The ground truth is $I(X;Y) = (1 -
p)(2 \log 2 - \gamma - \sum_{k=1}^{\infty} \log k \cdot 2^{-k} ) \approx (1-p)0.3012$, where $\gamma$ is the Euler-Mascheroni constant~\cite{gao:17:mixture}. Last, in \textbf{Experiment IV}, we generate the data according to the Markov chain $X \to Z \to Y$ (see Mesner and Shalizi~\cite{mesner:19:ms}). In particular, $X$ is exponentially distributed with rate $\frac{1}{2}$, $Z \sim \text{Pois}(x)$ for $X=x$ and $Y$ is binomial distributed with size $n=z$ for $Z=z$ and probability $p=\frac{1}{2}$. Due to the Markov chain structure, the ground truth $I(X;Y \mid Z) = 0$.

For each of the above experiments, we sample data with sample size $n \in \{ 100, 200, \dots, 1 \, 000 \}$ and generate $100$ data sets per sample size. We run each of the estimators on the generated data and show the mean squared error (MSE) of each estimator in Fig.~\ref{fig:ground-truth}.
Overall, our estimator performs best or very close to the best throughout the experiments and reaches an MSE lower than $0.001$ with at most $1 \, 000$ samples. The best competitors are \MS and \MIIC; however, both are biased when we consider discrete-continuous mixture variables, as we show in Experiment V. 

\begin{figure}[t]
	\begin{minipage}[t]{0.5\linewidth}
	\includegraphics[]{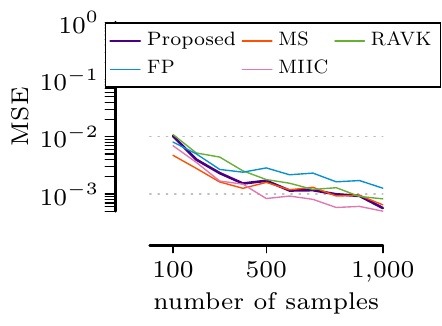}
	\end{minipage}%
	\begin{minipage}[t]{0.5\linewidth}
	\includegraphics[]{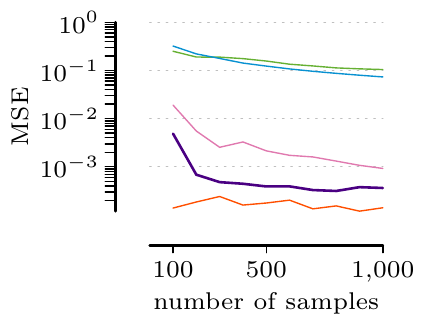}
	\end{minipage}%
	\linebreak
	\begin{minipage}[t]{0.5\linewidth}
	\includegraphics[]{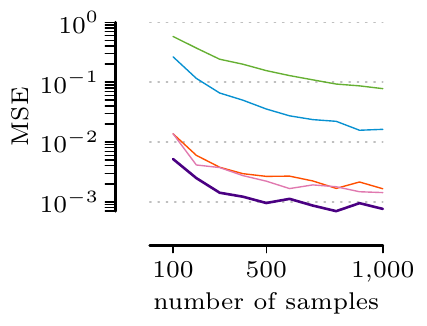}
	\end{minipage}%
	\begin{minipage}[t]{0.5\linewidth}
	\includegraphics[]{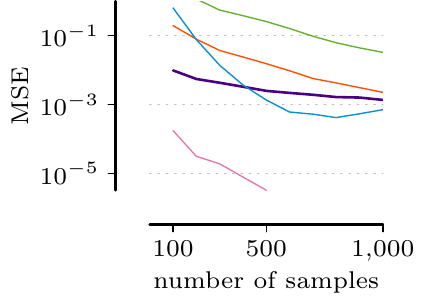}
	\end{minipage}%
	\caption{Synthetic data with known ground truth. Ordered from top-left to bottom-right, we show the MSE for Experiments I-IV, for our estimator and competing algorithms \MS, $\RAVK$, \FP and \MIIC.}
	\label{fig:ground-truth}
\end{figure}

\subsubsection{Experiment V}

Next, we generate data according to a discrete-continuous mixture~\cite{gao:17:mixture}. Half of the data points are continuous, with $X$ and $Y$ being standard Gaussian with correlation $\rho = 0.8$, while the other half follows a discrete distribution with $P(1, 1) = P(-1,-1) = 0.4$ and $P(1,-1) = P(-1,1) = 0.1$. In addition, we generate $Z$ independently with $Z \sim \text{Binomial(3, 0.2)}$. Hence the ground truth is equal to $I(X;Y) = I(X;Y \mid Z) = 0.4 \cdot \log \frac{0.4}{0.5^2} + 0.1 \cdot \log \frac{0.1}{0.5^2} - \frac{1}{4} \log (1 - 0.8^2) \approx 0.352$.

In Fig.~\ref{fig:ground-truth-md} (top) we show the mean and MSE for this experiment. We see that our estimator starts by overestimating the true value, but its average quickly converges to the true value, while the competing estimators seem to have a slightly positive or negative bias. Especially \FP and \MIIC, which were not designed for this setup, have a clear bias even for $1 \, 000$ data points.
The same trend can be observed for MSE.

\subsubsection{Experiment VI}

Last, we test how sensitive our method is to dimensionality. We generate $X$ and $Y$ as in Experiment II, but fix $n$ to $2 \, 000$ and add $k$ independent random variables, $Z_k \sim \text{Binomial(3, 0.5)}$. 

Fig.~\ref{fig:ground-truth-md} (bottom) shows the mean and MSE. Our estimator recovers the true CMI up to a negligible error up to $k=2$. After that, it starts to slowly underestimate the true CMI. This can be explained by the fact that the model costs increase linearly with the domain size and hence, we will fit fewer bins to the continuous variable for large $k$. We validated this conjecture by repeating the experiment for $n= 10 \, 000$. On this larger sample size, the MSE for our estimator remained below $0.001$ even for $k=4$. While \MIIC is slightly more stable for $k \ge 3$, the competing $k$NN-based estimators deviate quite a bit from the true estimate for higher dimensions.

\begin{figure}[t]
	\begin{minipage}[t]{0.5\linewidth}
	\includegraphics[]{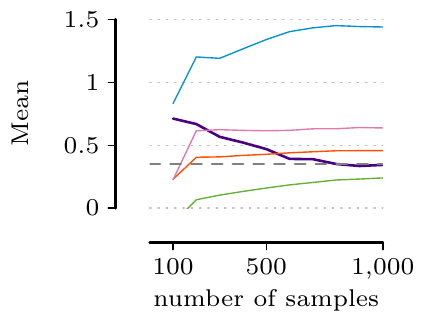}
	\end{minipage}%
	\begin{minipage}[t]{0.5\linewidth}
	\includegraphics[]{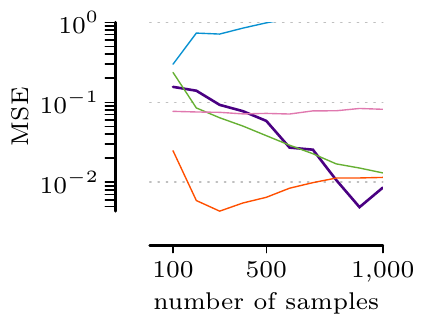}
	\end{minipage}%
	\linebreak
	\begin{minipage}[t]{0.5\linewidth}
	\includegraphics[]{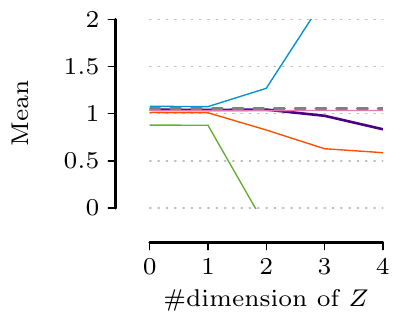}
	\end{minipage}%
	\begin{minipage}[t]{0.5\linewidth}
	\includegraphics[]{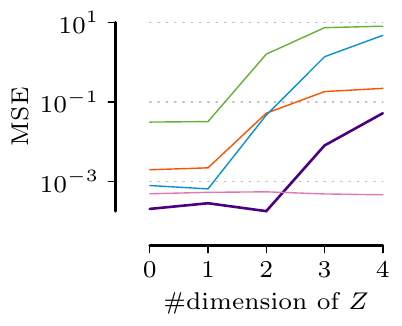}
	\end{minipage}%
	\caption{Top row: Experiment V, where we show the mean of the estimators (left) with the true CMI as a dashed gray line and the MSE (right). Bottom row: Experiment VI, where the sample size is constant at $2 \, 000$ and the x-axis refers to the number of dimensions of $Z$. We show the mean (left) and MSE (right). The color coding is chosen as in Fig.~\ref{fig:ground-truth}.}
	\label{fig:ground-truth-md}
\end{figure}

Overall, we are on par with or outperform the best competitor throughout Experiments~I--VI. Especially on mixture data, which is our main focus, our method is the only one that converges to the true estimate.

\subsection{Independence Testing}
\label{sec:indep-experiments}

In theory, two random variables $X$ and $Y$ are conditionally independent given a set of random variables $Z$, denoted as $X \Indep Y \mid Z$, if $I(X;Y \mid Z) = 0$. Vice versa, $X$ and $Y$ are dependent given $Z$, if $I(X;Y \mid Z) > 0$. In practice, we cannot simply rely on our estimator to conclude independence: due to the monotonicity of mutual information, i.e., $I(X;Y) \le I(X;Y \cup Z)$, estimates will rarely be \emph{exactly zero} in the limited sample regime, but only \emph{close to zero}~\cite{marx:19:sci,vinh:14:chisqcorrection}. To address this problem, we use our algorithm to discretize $X, Y$ and $Z$, and compute $I_{\correction}(X; Y | Z) := \max \{ 0, I_n(X_d; Y_d | Z_d) + \correction_n(X_d; Y_d | Z_d) \}$,
where $\correction_n$ is a correction term calculated from the discretized variables, which is negative. In the following, we evaluate our estimator with two different correction criteria. The first one is a correction for mutual information based on the Chi-squared distribution, with $\correction_n$ equal to $- \mathcal{X}_{\alpha, l} / 2n$~\cite{vinh:14:chisqcorrection}, where $\mathcal{X}_{\alpha, l}$ refers to the critical value of the Chi-squared distribution with significance level $\alpha$ and degrees of freedom $l$. We can compute the degrees of freedom $l$ from the domain sizes of the discretized variables for the conditional case as $l = (|\dX|-1)(|\dY|-1) |\dZ|$~\cite{suzuki:19:consistency}, and for the unconditional case as $l  =(|\dX|-1)(|\dY|-1)$. For the second correction, we replace each empirical entropy in $I_n(X_d; Y_d | Z_d)$ with its corresponding stochastic complexity term as defined in Sec.~\ref{sec:mdlintro}. If we subtract the regret terms for $H_n(X_d,Y_d,Z_d)$ and $H_n(Z_d)$ from those for $H_n(X_d,Z_d)$ and $H_n(Y_d,Z_d)$, we are guaranteed to get a negative value, thus a valid regret term~\cite{marx:19:sci}.
In the following, we refer to the test using the Chi-squared correction as $I_{\mathcal{X}^2}$ and to the one based on stochastic complexity as $I_{\text{SC}}$.

To test how well $I_{\mathcal{X}^2}$ and $I_{\text{SC}}$ perform on mixed-type and continuous data, we benchmark both against state-of-the-art kernel-based tests RCIT and RCoT~\cite{strobl:19:rcit}, as well as JIC~\cite{suzuki:16:estimator}, and \MIIC~\cite{cabeli:20:mixedsc}, which are both discretization-based methods using a correction based on stochastic complexity.\!\footnote{Note that MIIC calculates stochastic complexity based on factorized NML and JIC uses an asymptotic approximation of stochastic complexity, while we use quotient NML for $I_{\text{SC}}$~\cite{marx:19:sci,silander:18:qnml}.} To apply RCIT and RCoT on mixed data, we treat the discrete data points as integers. In the following, we evaluate the performance of each test in a causal discovery setup. In addition, we provide a more detailed description of the data generation and experiments on individual collider and non-collider structures in Supplementary Material~\ref{apx:data-generation}.

\subsubsection{Causal Discovery}

\begin{figure}[t]%
	\centering
	\includegraphics[]{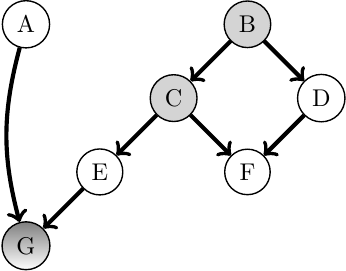}
	\caption{Synthetic network with continuous (white), discrete (gray) and mixed (shaded) random variables consisting of different causal structures, such as colliders, a chain ($C \to E \to G$), and a fork ($C \leftarrow B \to D$).}
	\label{fig:synthetic-network}
\end{figure}

\begin{figure}[t]
	\begin{minipage}[t]{0.5\linewidth}
	\includegraphics[]{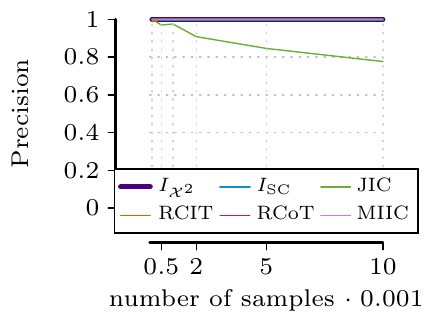}
	\end{minipage}%
	\begin{minipage}[t]{0.5\linewidth}
	\includegraphics[]{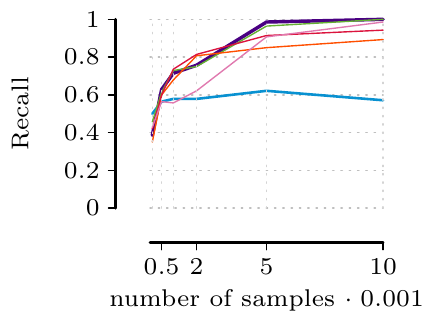}
	\end{minipage}%
	\caption{Precision (left) and recall (right) on undirected graphs inferred using the PC-stable algorithm equipped with the corresponding independence test. The data is generated from the graph shown in Fig.~\ref{fig:synthetic-network}.}
	\label{fig:prrec}  
\end{figure}

To evaluate our test in a causal discovery setting, we generate data according to a small synthetic network---shown in Fig.~\ref{fig:synthetic-network}---that consists of a mixture of generating mechanisms that we used in experiments I-IV and includes continuous and discrete (ordinal) random variables and one mixture variable, which is partially Gaussian and partially Poisson distributed (for details see Supplementary Material~\ref{apx:data-generation}).
To evaluate how well the ground truth graph can be recovered, we apply the PC-stable algorithm~\cite{colombo:12:pcmodification,spirtes:00:book} equipped with the different independence tests, where we use $\alpha = 0.01$ for $I_{\mathcal{X}^2}$, RCIT and RCoT. 

Fig~\ref{fig:prrec} shows recovery precision and recall for the undirected graph, averaged over $20$ draws per sample size $n \in \{ 100, 500, 1 \, 000,$ $2 \, 000, 5 \, 000, 10 \, 000 \}$.

We see that overall $I_{\mathcal{X}^2}$ performs best and is the only method that reaches both a perfect accuracy and recall. While JIC also reaches a perfect recall, it finds too many edges leading to a precision of only $80 \%$. Although also MIIC, RCIT and RCoT have a perfect precision, their recall is worse than for $I_{\mathcal{X}^2}$. Neither of the kernel-based tests manages to recall all the edges even for $10 \, 000$ samples. After a closer inspection, this is due to the edge $E \to G$ that involves the discrete-continuous variable $G$. If we compare $I_{\mathcal{X}^2}$ to $I_{\text{SC}}$, we clearly see that the latter is too conservative, which leads to a bad recall.

\section{Conclusion}\label{sec:conclusion}
We proposed a novel approach for the estimation of conditional mutual information from data that may contain  discrete, continuous, and mixture variables. To be able to deal with discrete-continuous mixture variables, we defined a class of generalized adaptive histogram models. Based on our observation that CMI for mixture-variables can be written as a sum of entropies, we presented a CMI estimator based on such histograms, for which we proved that it is consistent.

Further, we used the minimum description length principle to formally define optimal histograms, and proposed a greedy algorithm to practically learn good histograms from data. Finally, we demonstrated that our algorithm outperforms state-of-the-art (conditional) mutual information estimation methods, and that it can be successfully used as a conditional independence test in causal graph structure learning. Notably, for both setups, we observe that our approach performs especially well when mixture variables are present.

\section*{Acknowledgements}
This work is part of the research programme `Human-Guided Data Science by Interactive Model Selection' with project number 612.001.804, which is (partly) financed by the Dutch Research Council (NWO).

\newpage

\bibliographystyle{IEEEtranS}
\bibliography{bib/abbrev,bib/bib-jilles,bib/bib-paper}

\ifapx
\newpage
\appendix
\section*{Supplementary Material}
\setcounter{section}{19}

The supplementary material is structured as follows. First, we provide proofs for all lemmas and theorems. After that, we provide the pseudocode for our algorithm. Last, we provide additional experiments and details for the data generation for the causal discovery experiment.

\subsection{Proofs}
\label{sec:apx-proofs}

\subsubsection{Proof of Lemma~\ref{abs_cont_oned}}
\begin{proof}
	Given a Borel set $A \subseteq \mathbb{R}$ such that $v(A) = u(A \cap S_c) + |A \cap S_d| = 0$, we have $u(A \cap S_c) = 0$ due to non-negativity of any measure, as well as $|A \cap S_d| = 0$. Since $A \cap S_c \subseteq S_c$, by the definition of $S_c$ we have $P(A \cap S_c) = 0$.
	It remains to show that $A \cap S_d = \emptyset$, which we do by contradiction. Assume that $A \cap S_d \neq \emptyset$, then there exists $x \in A \cap S_d$ s.t. for a set containing only $x$, $|\{x\}| = 1$. Then $|A \cap S_d| \geq |\{x\}| = 1$, which contradicts $|A \cap S_d| = 0$. Thus, we must have $A \cap S_d = \emptyset$ and then $P_X(A) = 0$.
\end{proof}

\subsubsection{Proof of Lemma~\ref{lemma:abs_cont}}
\begin{proof}
	Given a $k$-dimensional Borel set $A$, there exist one-dimensional Borel sets $A_1,\ldots, A_k$ such that $A = A_1 \times \ldots \times A_k$. If $v(A)=0$, then there exists at least one $v^i, i \in \{1,\ldots,k\}$, such that $v^i(A_i)=0$. Thus, by Lemma~\ref{abs_cont_oned}, $P_{W_i}(A_i) = 0 \Rightarrow P_W(\mathbb{R}\times \ldots \times \mathbb{R} \times A_i \times \mathbb{R}\times \ldots \times \mathbb{R}) = 0 \Rightarrow P_{W}(A) = 0$, as $A = A_1 \times \ldots \times A_k \subseteq \mathbb{R}\times \ldots \times \mathbb{R} \times A_i \times \mathbb{R}\times \ldots \times \mathbb{R}$.
\end{proof}

\subsubsection{Proof of Lemma~\ref{lemma:3H}}
\begin{proof}
	We first proof the statement for $Z \neq \emptyset$, for which we can write $I(X;Y|Z) = I(X;\{Y,Z \}) -  I(X;Z)$ by the chain rule for mutual information. Thus, it suffices to prove that $I(X;Z) = H(X) + H(Z) - H(X,Z)$ and $I(X; \{Y,Z \}) = H(X) + H(Y,Z) - H(X,Y,Z)$. 
	Next, denote $v$ as the product measure defined based on $(X,Z)$, where $v = v^1 \times \ldots \times v^{k_{XZ}}$, and $k_{XZ}$ is the number of dimensions of $X$ plus that of $Z$; then by Lemma~\ref{lemma:abs_cont}, we also have $P_{XZ} \ll v$.
	Then, we show that $P_XP_{Z} \ll v$. For some $k_{XZ}$-dimensional Borel set $A = A_1 \times \ldots \times A_{k_{XZ}}$ satisfying $v(A) = 0$ there exists $v^i \in \{v^1,\ldots,v^{k_{XZ}}\}$ such that $v^i(A_i)=0$. Hence, $P_XP_{Z}(A) = 0$
	because $0 \leq P_XP_{Z}(A) = P_XP_{Z}(A_1 \times \ldots \times A_{k_{XZ}}) \leq P_XP_{Z}(\mathbb{R} \times \ldots \mathbb{R} \times A_i \times \mathbb{R} \ldots \times \mathbb{R}) = P_i(A_i) = 0$, where $P_i$ is the marginalization of the product measure $P_XP_Z$ to the $i$th dimension and $P_i(A_i) = 0$ is because $v^i(A_i)=0$ by the definition of $v$. 
	
	Finally, by the chain rule of the Radon-Nikodym derivative we have that 
	\begin{align}
	I(X;Z) &= \int \log \frac{dP_{X{Z}}}{dP_XP_{Z}} d P_{XZ} \\
	&= \int \log \frac{dP_{XZ}/dv}{dP_{X}P_{Z}/dv}(dP_{XZ}/dv)dv \\
	&= H(X) + H(Z) - H(X,Z) \; .
	\end{align} 
	The proof for $I(X;\{Y,Z\})$ is equivalent. If $Z = \emptyset$, CMI reduces to $I(X;Y)$, for which we can prove the statement in the same manner.
\end{proof}

\subsubsection{Proof of Theorem~\ref{th:convergence}}
To proof Theorem~\ref{th:convergence} we need several intermediate results. Lemma~\ref{le:histoDiscretization} shows that a histogram results in a valid discretization as all terms corresponding to volumes in $I^h$ cancel out, and hence $I^h$ can be written as a sum of plug-in estimators of \emph{discrete entropies}. Then, Lemma~\ref{le:discrete-convergence} shows a classic result that the plug-in estimator of discrete entropies will converge to the true entropy almost surely. Further, we show in Lemma~\ref{le:histoToTrue} that as the volumes of histogram bins containing continuous values go to 0, the true entropies of discretized variables (which are discretized by the histogram) converges to the true entropy of original variables. 

\begin{Definition}
	Given discrete random variables $X_d,Y_d,Z_d$ (possibly multi-dimensional), with support $S_d^X, S_d^Y,S_d^{Z}$, and given dataset $D = (x_i,y_i, z_i)_{i \in \{1,\ldots,n\}}$ with sample size $n$, the plug-in estimator of discrete entropy $H$ is denoted and defined as
	$$H_n(X_d,Y_d,Z_d) = -\sum_{j \in S_d^X \times S_d^Y \times S_d^{Z}} \hat{p}(j) \log \hat{p}(j)$$ 
	with probability estimates 
	$$\hat{p}(j) = \frac{|\{(x_i,y_i,z_i)_{i \in \{1,\ldots,n\}} : (x_i,y_i,z_i) = q_j \}|}{n} \; ,$$
	where $|\cdot|$ represents the cardinality of a set, and $q_j$ is the $j$th element in $S_d^X \times S_d^Y \times S_d^{Z}$.
\end{Definition}

\begin{lemma}
\label{le:discrete-convergence}
	Given a discrete random vector $(X_d,Y_d,Z_d)$, $lim_{n \rightarrow \infty} H_n(X_d,Y_d,Z_d) = H(X_d,Y_d,Z_d)$ almost surely~\cite{antos2001convergence}. 
\end{lemma}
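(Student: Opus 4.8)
The plan is to reduce the claim to the elementary fact that the empirical frequencies converge almost surely, combined with the continuity of the map $t \mapsto -t\log t$. First I would observe that in our setting the discretized vector $(X_d,Y_d,Z_d)$ takes values in the \emph{finite} support $S_d^X \times S_d^Y \times S_d^{Z}$, since it arises from a histogram with finitely many bins. Consequently $H_n(X_d,Y_d,Z_d)$ is a sum of finitely many terms, one per cell $q_j$. This finiteness is precisely what allows me to bypass the delicate tail estimates needed for countably infinite alphabets.

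Next, for each fixed cell $q_j$, the count $|\{i : (x_i,y_i,z_i) = q_j\}|$ is a sum of $n$ i.i.d.\ Bernoulli indicators whose common mean is the true cell probability $p(j)$. Hence by the strong law of large numbers $\hat{p}(j) \to p(j)$ almost surely for each $j$. Because there are only finitely many cells, the intersection of these finitely many almost-sure events is again almost sure, so with probability one $\hat{p}(j) \to p(j)$ holds simultaneously for all $j$.

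Then I would invoke continuity: the function $\phi(t) = -t\log t$ is continuous on $[0,1]$ under the convention $0\log 0 = 0$. Applying $\phi$ coordinatewise and using the simultaneous almost-sure convergence above gives $\phi(\hat{p}(j)) \to \phi(p(j))$ almost surely for every $j$. Since summing over the finite index set preserves almost-sure convergence, this yields $H_n(X_d,Y_d,Z_d) \to H(X_d,Y_d,Z_d)$ almost surely, as claimed.

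The only genuine obstacle would arise if one insisted on a countably infinite support: there the finite-sum argument breaks down and one must control the tail contribution $\sum_{j>m}\phi(\hat{p}(j))$ uniformly in $n$, which requires the finite-entropy assumption. This is exactly the content of the strong-consistency result of Antos and Kontoyiannis~\cite{antos2001convergence}, which we may cite directly. For our histogram-based discretization the support is always finite, so the elementary argument above suffices and no such tail control is needed.
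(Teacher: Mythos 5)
Your elementary argument (strong law of large numbers for each cell, continuity of $\phi(t)=-t\log t$ on $[0,1]$, and interchanging the limit with a \emph{finite} sum) is a correct and self-contained proof for finite alphabets, and in that respect it goes beyond the paper, which offers no proof at all and simply cites the result of Antos and Kontoyiannis~\cite{antos2001convergence}. The genuine gap is your reduction step: the claim that ``for our histogram-based discretization the support is always finite'' is false in the setting where Lemma~\ref{le:discrete-convergence} is actually used. In the paper, the discrete part $S_d$ of a mixture variable is only required to be \emph{countable}, and the paper's own examples have countably infinite discrete support (Poisson and zero-inflated Poisson variables in Experiments III and IV, and in the synthetic network). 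Moreover, in Theorem~\ref{th:convergence} the quantity $v'$ is the \emph{maximum} volume of the bins in $B'$; when the continuous component has unbounded support (Gaussian and exponential variables are both used in the paper), covering that support with bins of volume at most $v'$ forces $B'$ to contain countably infinitely many bins. Hence the discretized vector $(X_d,Y_d,Z_d)$ to which the lemma is applied generally has a countably infinite alphabet, and your finite-sum argument does not cover it.

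As you observe yourself, on a countably infinite alphabet the pointwise almost-sure convergence $\hat{p}(j)\to p(j)$ still holds for every $j$ (a countable intersection of almost-sure events is almost sure), but this is not enough: one must control the tail contribution $\sum_{j>m}\phi(\hat{p}(j))$ uniformly in $n$ to interchange the limit with the infinite sum, and this requires the finite-entropy hypothesis and the machinery of~\cite{antos2001convergence}. So the correct conclusion is the opposite of your final sentence: the citation is not dispensable. Your argument proves the special case of finite support; the general case the paper needs must still rest on the cited theorem (or on a tail-control argument that you have not supplied).
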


\begin{lemma}
\label{le:histoToTrue}
	Given a random vector $(X,Y,Z)$ that contains discrete-continuous mixture random variables, with bins $B = B'\cup B''$ and the resulting discretized random vector $(X_d, Y_d, Z_d)$, where $B''$ contains discrete data points (of which every dimension has a discrete value) and $B' = B \setminus B''$, we have 
	\[
	\lim_{v' \rightarrow 0} H(X_d, Y_d, Z_d) = H(X, Y, Z) \; ,
	\]
	where $v' = \max_{B_j \in B'}(v(B_j))$.
\end{lemma}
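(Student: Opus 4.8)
The plan is to first make precise what $H(X_d,Y_d,Z_d)$ denotes in this statement. Because the discretized vector is supported on the bins $B=B'\cup B''$, and because the surrounding argument re-inserts the bin volumes that cancel in Lemma~\ref{le:histoDiscretization}, the relevant quantity is the \emph{measure-theoretic} entropy of the piecewise-constant histogram density rather than the plain discrete entropy of the bin indices (the latter would diverge as bins shrink). Let $f=dP_{XYZ}/dv$ be the true density, which exists by Lemma~\ref{lemma:abs_cont}, let $\bar f_j=P(B_j)/v(B_j)=\frac{1}{v(B_j)}\int_{B_j}f\,dv$ be its $v$-average over bin $B_j$, and let $\tilde f_{v'}$ equal $\bar f_j$ on each $B_j$. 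Then I would read the claim as
\[
H(X_d,Y_d,Z_d) = -\sum_j P(B_j)\log\frac{P(B_j)}{v(B_j)} = -\int \tilde f_{v'}\log\tilde f_{v'}\,dv,
\]
with the goal of showing this tends to $-\int f\log f\,dv = H(X,Y,Z)$ as $v'\to 0$. Writing each summand as $-P(B_j)\log\bar f_j$ exposes the key point: the $\log v(B_j)$ contribution is retained, exactly cancelling the divergence a plain discrete entropy would exhibit as the continuous bins shrink.

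Second, I would decompose the sample space by which coordinates are discrete and which are continuous. Every coordinate $i$ contributes either a point of $S_d^i$ or of $S_c^i$ (the $P$-null set $S_o^i$ is irrelevant), so $\mathbb{R}^k$ splits into finitely many regions indexed by a ``discreteness pattern'' $T\subseteq\{1,\dots,k\}$; on such a region $v$ factorizes into counting measures on the coordinates in $T$ and Lebesgue measures on the coordinates outside $T$. The bins in $B''$ lie in the fully discrete region $T=\{1,\dots,k\}$: they are single atoms with $v(B_j)=1$, are never refined, and their contribution $-\sum_{B_j\in B''}P(B_j)\log P(B_j)$ is independent of $v'$ and matches the discrete part of $H(X,Y,Z)$ exactly. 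For every remaining pattern $T$ I would condition on a fixed assignment of discrete atoms to the coordinates in $T$ and treat the resulting ``continuous slice'' over the coordinates outside $T$.

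Third, on each such continuous slice the sum $-\sum_j P(B_j)\log\bar f_j$ is a Riemann sum for $-\int f\log f\,dv$ over that slice: since $\bar f_j$ is the $v$-average of $f$ on $B_j$, Riemann integrability of the conditional density gives $\bar f_j\to f$ and $\sum_j v(B_j)\,\bar f_j\log\bar f_j\to\int f\log f\,dv$ as the continuous bins shrink, which is exactly the multivariate, conditioned form of \cite[Theorem~8.3.1]{cover:12:elements}. Summing the invariant discrete contribution and the per-slice limits over all atom assignments and all patterns $T$ then yields $\lim_{v'\to 0}H(X_d,Y_d,Z_d)=H(X,Y,Z)$. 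As a sanity check and a source of one-sided control, conditional Jensen applied to the convex map $t\mapsto t\log t$ gives $-\int\tilde f_{v'}\log\tilde f_{v'}\,dv\ge -\int f\log f\,dv$, with the histogram entropy decreasing monotonically as the partition refines; viewing $\tilde f_{v'}=E_v[f\mid\mathcal F_{v'}]$ as a martingale along refining partitions even yields $\tilde f_{v'}\to f$ almost everywhere.

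The main obstacle is justifying the interchange of the limit $v'\to 0$ with the (possibly infinite) sum over discrete atoms and with integration over unbounded continuous slices: \cite[Theorem~8.3.1]{cover:12:elements} is stated for a single continuous variable with a Riemann-integrable density, so I would need (i) a Riemann-integrability assumption on each conditional continuous density, (ii) finiteness of $\int f\log f\,dv$, i.e.\ of the entropy, so that $f\log f\in L^1(v)$, and (iii) a dominating or uniform-integrability argument (e.g.\ $\bar f_j\log\bar f_j\le E_v[f\log f\mid\mathcal F_{v'}]$ from Jensen, combined with the tail control implied by finiteness of the entropy) to pass the per-slice limits through the summation over the countably many atoms. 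A secondary technical point is that $v'$ bounds only bin \emph{volumes}, while convergence of Riemann sums needs the bin \emph{diameters} to vanish; I would therefore assume the continuous bins stay regular enough that vanishing volume forces vanishing diameter.
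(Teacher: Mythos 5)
Your proposal is correct, and its core is the same as the paper's proof: split the bins into the purely discrete atoms $B''$, whose volume is $1$ and whose contribution $-\sum_{B_j\in B''}P(B_j)\log P(B_j)$ does not depend on $v'$, and the remaining bins $B'$, whose volume-corrected terms $-\sum_{B_j\in B'}P(B_j)\log\left(P(B_j)/v(B_j)\right)$ are Riemann sums converging to the continuous contribution of $-\int f\log f\,dv$ via \cite[Theorem~8.3.1]{cover:12:elements}. The paper's proof is exactly this split, compressed into two lines (and written with sign and $v(B_j)$-weight sloppiness). Where you go beyond it, usefully: (i) you make explicit that the lemma only holds under the volume-corrected reading of $H(X_d,Y_d,Z_d)$, i.e.\ the measure-theoretic entropy of the piecewise-constant histogram density, since the plain discrete entropy of the bin labels---which is what Lemma~\ref{le:discrete-convergence} actually delivers in the chain of equalities proving Theorem~\ref{th:convergence}---diverges as the continuous bins shrink; the paper never states this, and its term-by-term use of this lemma is only sound because the volume terms cancel across the four entropies (Lemma~\ref{le:histoDiscretization}), which is precisely the repair your reading supplies; (ii) your decomposition by discreteness pattern correctly handles bins in $B'$ that mix atom coordinates with interval coordinates, to which the cited continuous-case result does not literally apply and which the paper silently absorbs into the continuous case; (iii) the technical conditions you flag---$f\log f\in L^1(v)$, uniform integrability to exchange the limit with the countable sums over atoms, and the fact that vanishing volume does not force vanishing diameter (e.g.\ $[0,\epsilon]\times[0,1]$)---are genuine hypotheses that the paper's proof tacitly assumes. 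These are gaps in the paper's argument rather than in yours; your Jensen bound $-\int\tilde f_{v'}\log\tilde f_{v'}\,dv\ge-\int f\log f\,dv$ and the martingale convergence $\tilde f_{v'}\to f$ along refining partitions are the natural tools for closing them.
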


\begin{proof}
	Firstly, it is well-known that this result holds if $(X,Y,Z)$ is a continuous random vector \cite{cover:12:elements}; then, if $(X,Y,Z)$ contains mixture variables, 
	\begin{align}
	H(X,Y,Z) &= \lim_{v' \rightarrow 0} \sum_{B_j \in B'} \frac{P_{X_dY_dZ_d}}{v(B_j)} \log \frac{P_{X_dY_dZ_d}}{v(B_j)}  \\
	&\; \; \; \; \; \; \: \, + \sum_{B_j \in B''} \frac{P_{X_dY_dZ_d}}{v(B_j)} \log \frac{P_{X_dY_dZ_d}}{v(B_j)} \\
	&= \lim_{v' \rightarrow 0} H(X_d,Y_d, Z_d) \; ,
	\end{align}
	which concludes the proof.
\end{proof}
\begin{Definition}
	Given a random vector $(X,Y,Z)$ that contains mixture variables, and an adaptive grid $B$, we define the discretized random variable $X_d, Y_d, Z_d$, with probability measure (probability mass function) 
	$$P_{X_d,Y_d,Z_d}(  (j_1,j_2,j_3) ) = \int_{B_j} \frac{d_{XYZ}}{dv}dv \; ,$$ 
	where $B_j$ denotes the $j$th bin of $B$. 
\end{Definition}

\begin{lemma}
\label{le:histoDiscretization}
	Given a $k$-dimensional random vector $(X,Y,Z)$ that contains mixture variables with an unknown probability measure $P_{XYZ}$, a dataset $D = (x_i,y_i,z_i)_{i \in \{1,\ldots,n\}}$ generated by $P_{XYZ}$, a histogram model $M$, and corresponding discretized random vector $(X_d,Y_d,Z_d)$, we have $I^h(X,Y|Z)$ is equal to 
	\[
	H_n(X_d, Z_d) + H_n(Y_d, Z_d) - H_n(X_d, Y_d, Z_d)- H_n(Z_d) \; .
	\]
	That is, the terms corresponding to volumes in $I^h$ cancel out and our histogram model results a valid discretization. 
\end{lemma}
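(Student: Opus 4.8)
The plan is to evaluate each histogram entropy $H^h$ appearing in $I^h$ explicitly, exploit the fact that the maximum likelihood density $f^h_{\hat\theta}$ is piecewise constant, and show that each $H^h$ splits into a discrete plug-in entropy $H_n$ plus a volume-correction term that factorizes over the dimension blocks thanks to the product structure of $v$. First I would substitute the MLE from Eq.~\eqref{eq:ml_hist}, namely $f^h_{\hat\theta} = c_j/(n\,v(B_j))$ on bin $B_j$, into the definition of $H^h(X,Y,Z)$. Since the integrand is constant on each bin and $\int_{B_j} dv = v(B_j)$, the integral collapses to a finite sum, yielding
\[
H^h(X,Y,Z) = H_n(X_d,Y_d,Z_d) + \sum_j \hat p_j \log v(B_j),
\]
where $\hat p_j = c_j/n$ are exactly the empirical probabilities of the discretized vector and $H_n$ is the discrete plug-in entropy.

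The crucial structural step is to use $v = v^1 \times \cdots \times v^k$. Grouping the dimensions into those of $X$, $Y$, and $Z$, the bin volume factorizes as $v(B_j) = v_X(a)\,v_Y(b)\,v_Z(c)$ when the joint bin index $j$ is written as $(a,b,c)$. Hence $\log v(B_j)$ splits additively and the correction term becomes
\[
\sum_a \hat p^X_a \log v_X(a) + \sum_b \hat p^Y_b \log v_Y(b) + \sum_c \hat p^Z_c \log v_Z(c),
\]
with $\hat p^X_a, \hat p^Y_b, \hat p^Z_c$ the relevant empirical marginals. Repeating the computation for the marginal entropies first requires checking that marginalizing $f^h_{\hat\theta}(x,y,z)$ --- e.g.\ integrating out $y$ --- yields the natural marginal histogram density $c_{ac}/(n\,v_X(a)v_Z(c))$ with $c_{ac}=\sum_b c_{abc}$; this holds because integrating the constant density over each $y$-bin multiplies by $v_Y(b)$ and cancels the corresponding factor. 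With that in hand, $H^h(X,Z)$, $H^h(Y,Z)$, and $H^h(Z)$ each decompose into the corresponding $H_n$ plus only the volume terms indexed by the dimension blocks they retain.

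Finally I would insert the four decompositions into $I^h = H^h(X,Z) + H^h(Y,Z) - H^h(X,Y,Z) - H^h(Z)$ and verify the bookkeeping: the $X$-volume term appears with coefficient $+1$ from $H^h(X,Z)$ and $-1$ from $H^h(X,Y,Z)$; the $Y$-volume term with $+1$ from $H^h(Y,Z)$ and $-1$ from $H^h(X,Y,Z)$; and the $Z$-volume term with $+1,+1,-1,-1$ from the four entropies. Every coefficient sums to zero, so all volume-correction terms cancel, leaving precisely
\[
H_n(X_d,Z_d)+H_n(Y_d,Z_d)-H_n(X_d,Y_d,Z_d)-H_n(Z_d).
\]

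The main obstacle is not any single calculation but getting the product-measure bookkeeping right: one must confirm that marginalizing the joint histogram density genuinely produces the lower-dimensional histogram densities with correctly aggregated counts, and that each volume factor attaches to exactly the dimension block one expects. Once the factorization $v(B_j)=v_X(a)v_Y(b)v_Z(c)$ is established and the marginalization identity is verified, the cancellation is forced by the $+1/-1$ structure of $I^h$ and needs no assumption on whether individual bins are discrete or continuous, which is why the same argument covers mixture variables uniformly.
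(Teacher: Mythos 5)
Your proposal is correct and follows essentially the same route as the paper's proof: substitute the piecewise-constant MLE so each $H^h$ collapses to the discrete plug-in entropy $H_n$ plus a volume-correction term, then use the product structure of $v$ to split the corrections by dimension block and observe that they cancel in the $+1/-1$ combination defining $I^h$. If anything, you are slightly more explicit than the paper in verifying the marginalization identity (that integrating out $y$ from the joint histogram density yields the marginal-count histogram density), which the paper's proof uses implicitly.
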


\begin{proof}
Denote the adaptive grid of histogram model $M$ as $B^{XYZ}$, which is the Cartesian product of bins defined on $X,Y,Z$---i.e. $B^{XYZ} = B^X \times B^Y \times B^{Z}$, and denote the corresponding MLE of histogram density function as $f^h_{\hat{\theta}_{XYZ}}$. Further, define a function $v_X$, such that for each $x_i$ in $D$, $v_X(x_i) = v(B^X_j)$ if $x_i \in B^X_j$, where $B^X_j$ is a bin of $B^X$ and $v$ is defined based on the random variable $X$. Then, define $v_Y, v_{Z}, v_{XZ}, v_{YZ}, v_{XYZ}$ similarly. 

By the definition $I^h(X,Y|Z)$ is equal to
\[
H^h(X, Z) + H^h(Y, Z) - H^h(X, Y, Z)- H^h(Z) \; .
\]
First consider $H^h(X,Z)$. We write $B^{XZ} = B^X \times B^{Z}$, with marginal density function $f^h_{\hat{\theta}_{XZ}}$. W.l.o.g. suppose that $B_{XZ}$ consists of $K$ bins, denoted as $B^{XZ}_j, j \in \{1,\ldots,K\}$. Then,
\begin{equation}
	\begin{split}
		 H^h(X, Z)  &= -\int_{\mathbb{R}^{k_X+k_Z}} f^h_{\hat{\theta}_{XZ}} \log f^h_{\hat{\theta}_{XZ}} dv \\
		& = -\sum_{j=1}^{K} \int_{B^{XZ}_j} f^h_{\hat{\theta}_{XZ}} \log f^h_{\hat{\theta}_{XZ}} dv\\
		&= -\sum_{j=1}^{K} c_j \log\left(\frac{c_j}{n v(B_j)}\right) \\
		&= -\sum_{j=1}^{K} c_j \log\left(\frac{c_j}{n}\right) + \sum_{i=1}^n \log(v_{XZ}(x_i,z_i)) \\
		&= H_n(X_d, Z_d) + \sum_{i=1}^n \log(v_{XZ}(x_i,z_i)) \; ,
	\end{split}
\end{equation}
where $c_j$ is the number of data points in $B_j$ and $v_{XZ}(x_i, z_i) = v_X(x_i)v_Z(z_i)$. The remaining entropies can be calculated similarly. Hence, $I^h(X,Y|Z) =  H_n(X_d, Z_d) + H_n(Y_d, Z_d) - H_n(X_d, Y_d, Z_d)- H_n(Z_d)$, as the sum of the volume related terms
\begin{align}
\sum_{i=1}^n \log(v_{XZ}(x_i,z_i)) + \sum_{i=1}^n \log(v_{YZ}(y_i,z_i)) \\
- \sum_{i=1}^n \log(v_{XYZ}(x_i,y_i,z_i)) - \sum_{i=1}^n \log(v_Z(z_i))
\end{align}
is equal to zero.
\end{proof}

To proof Theorem~\ref{th:convergence}, we link the above results:
\begin{equation}
\begin{split}
	&\lim_{v' \rightarrow 0} \lim_{n \rightarrow \infty} I^h(X;Y \mid Z) \\
= &\lim_{v' \rightarrow 0} \lim_{n \rightarrow \infty} (H^h(X,Z){+}H^h(Y,Z){-}H^h(X,Y,Z){-}H^h(Z)) \\
= &\lim_{v' \rightarrow 0} \lim_{n \rightarrow \infty} (H_n(X_d,Z_d){+}H_n(Y_d,Z_d){-} \\ 
& H_n(X_d,Y_d,Z_d){-}H_n(Z_d)) \\
= &\lim_{v' \rightarrow 0} (H(X_d, Z_d){+}H(Y_d, Z_d){-}H(X_d, Y_d, Z_d){-}H(Z_d))  \\
= &I(X;Y \mid Z) \; .
\end{split}
\end{equation}

\subsection{Implementation Details}
\label{apx:algorithm}

As discussed in Section~\ref{sec:algo}, our goal is to find that discretization that minimizes the joint entropy over a set of $k$ random variables via an iterative greedy algorithm.
We provide the pseudocode in Algorithm~\ref{alg:discretize:mixture}. As input, we are given a dataset $D = \{ D^1, \dots, D^k \}$ consisting of $n$ rows and $k$ columns, representing a sample of size $n$ from a $k$-dimensional random vector $X$, and a user-specified parameter $i_{\text{max}}$ specifying the maximum number of iterations. First, we initialize the discretization $X_d$ (line~\ref{a1:init}) by creating single bin histograms for the continuous points in $D_j$ and a bin with bin-width $1$ per discrete point. To detect the latter, we check if there exist $| \{ x \in \dX_j \mid D_j = x \} | \ge t$, where $t$ is a user-defined threshold. After that, we iteratively update the discretization for that $X_j$ providing the highest gain in stochastic complexity, until either the score cannot be improved or the maximum number of iterations has been reached (lines~\ref{a1:while}--\ref{a1:end:while}). To update the discretization of a variable $X_j$ we call the function $\text{refine}$ (line~\ref{a1:refine}), which receives as input the data $D_j$ and the discretization after iteration $i$. It then re-discretizes $X_j$ using an extension of the dynamic programming algorithm by Kontkanen et al.~\cite{kontkanen:07:histo}. In essence, instead of simply discretizing $X_j$ independently of the remaining variables, we keep the discretizations for all $X_i \neq X_j$ fixed and find the optimal histogram model $M^*$ over $X_j$ s.t. the overall score $L(D,M)$ is minimized.

\begin{algorithm2e}[tb!]
	\caption{}
	\label{alg:discretize:mixture}
	\Input{Data $D = \{ D_1, \dots, D_k \} \sim X$; \\
	Maximum number of iterations $i_{\text{max}}$}
	\Output{Discretization $X_d$}
	$X_d \leftarrow \text{init}(D)$; \label{a1:init}
	$i \leftarrow 1$\;
	\While {$X_d$ changes $\land \; i \le i_{\text{max}}$} { \label{a1:while}
		$X_d^i \leftarrow X_d$\;
		\ForEach{$j \in \{ 1, \dots, k \}$} {
			$X_d^{ij} \leftarrow \text{refine}(D_j \mid X_d)$\; \label{a1:refine}
			\If{$\Score(X_d^{ij}) < \Score(X_d^i$)}{
				$X_d^i \leftarrow X_d^{ij}$\;
			}
		}
		$X_d \leftarrow X_d^i$;
		$i \leftarrow i+1$\;
	} \label{a1:end:while}
	\Return{$X_d$}\;
\end{algorithm2e}

\subsection{Data Generation and Additional Experiments}
\label{apx:data-generation}

In the following, we first provide an empirical analysis on how the number of bins depends on the number of samples, then we give the details of the data generation for the experiments carried out on the synthetic causal network and last we provide additional experiments to evaluate $I_{\mathcal{X}^2}$ and $I_{\text{SC}}$.

\subsubsection{Sample Size and Number of Bins}

As discussed in Section~\ref{sec:histograms}, an important requirement to ensure consistency is that the number of bins grows as a sub-linear function w.r.t. the number of samples. We demonstrate that MDL-optimal histograms have this desirable property when learned on one-dimensional Gaussian distributions in Figure~\ref{fig:check-k}: the number of bins $K$ grows with $n$, but slower than $\sqrt{n}$. In addition, for multi-dimensional data, for which we can only approximate the histogram model that minimizes $L(D,M)$, we observe that if the number of dimensions increases, the average number of bins per dimension decreases if we keep $n$ fixed.
\begin{figure}[t]
	\begin{minipage}[t]{0.5\linewidth}
	\includegraphics[]{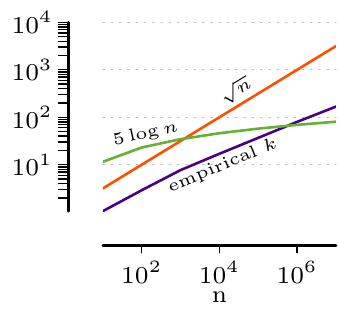}
	\end{minipage}%
	\begin{minipage}[t]{0.5\linewidth}
	\includegraphics[]{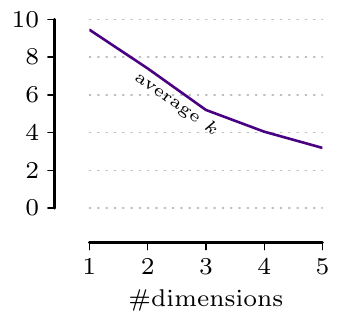}
	\end{minipage}%
	\caption{Left: Average number of bins $k$ to discretize $X \sim N(0,1)$ for increasing sample sizes ($20$ repetitions). Right: Per dimension of a multivariate Gaussian distribution with $X_i \Indep X_j$ and $X_i \sim N(0,1)$, we show the average number of bins ($n= 2\, 000$, $20$ repetitions).}
	\label{fig:check-k}
\end{figure}

\subsubsection{Synthetic Network}

Here, we describe the data generation for the synthetic network shown in Fig.~\ref{fig:synthetic-network}.
The source nodes of the network are $A$ and $B$. $A$ is generated as $A\sim \text{Exp}(1)$ and $B \sim \text{Unif}(0,4)$ (discrete). To get $B \to C$ we generate $C$ as $C \sim \text{Binom}(b,0.5)$ for $B=b$, for $B \to D$ we sample $D$ as $D \sim N(b-2,1)$ for $B=b$ and $E$ is sampled is exponentially distributed with rate $\frac{1}{c+1}$ for $C=c$. $F$ is generated as a function of $C$ and $D$. First, we generate $C'$ by rounding the values of $C$ and then we write $F$ as $F = D^{\frac{C'}{2}} + N(0,1)$. Last, we generate $G$ as the zero inflated Poissonization of $A$. Let $E' = \frac{\text{sign}(E-1)+1}{2}$, which ensures that $E'$ is either zero or one dependent on the value of $E$. Then $G \sim N(a,1)$ if $E'=0$ and $A=a$, and $G \sim \text{Poisson}(a)$ for $A=a$ if $G=1$.

\subsubsection{Detecting Collider and Non-Collider Structures}

\begin{figure}[t]
	\begin{minipage}[t]{0.5\linewidth}
	\includegraphics[]{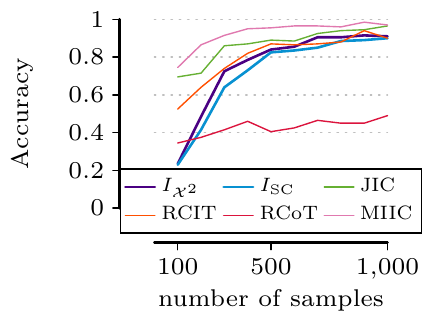}
	\end{minipage}%
	\begin{minipage}[t]{0.5\linewidth}
	\includegraphics[]{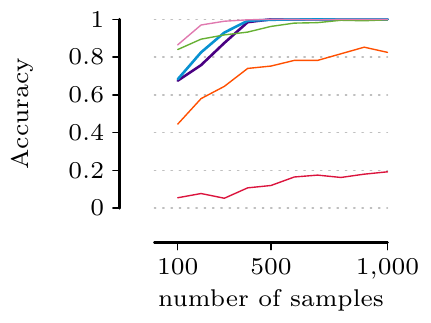}
	\end{minipage}%
	\linebreak
	\begin{minipage}[t]{0.5\linewidth}
	\includegraphics[]{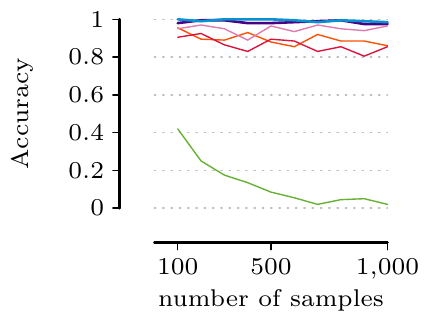}
	\end{minipage}%
	\begin{minipage}[t]{0.5\linewidth}
	\includegraphics[]{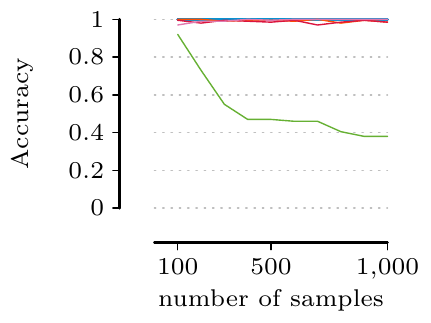}
	\end{minipage}%
	\caption{Accuracy for detecting continuous (left) and mixed-type (right) dependencies in collider structures (top) and independencies in non-collider structures (bottom) for different sample sizes.}
	\label{fig:collider}  
\end{figure}

To evaluate how well $I_{\mathcal{X}^2}$ and $I_{\text{SC}}$ can identify conditional (in)dependencies, we evaluate both variants on various generating mechanisms that involve collider and non-collider structures. Those structures are at the core of causal discovery, since collider structures can be inferred by detecting conditional dependencies, while non-collider structures impose conditional independencies. As in the causal discovery experiment, we set $\alpha = 0.01$ for $I_{\mathcal{X}^2}$, RCIT and RCoT. \\
\\
\textbf{Collider Structures} We generate data according to a collider structure, which can be represented by a directed acyclic graph as, e.g., $X \to Z \leftarrow Y$. According to this structure, we model $X$ and $Y$ by some distribution and write $Z$ as a non-deterministic function of $X$ and $Y$. We generate data for different generating mechanisms, including two continuous and four mixed settings.
\begin{enumerate}
	\item $X \Indep Y$ and $X,Y$ are either drawn from $N(0,1)$ or $\text{Uniform}( -2, 2 )$. $Z$ is an additive function of polynomials up to degree three or the tangent function plus additive noise $N\sim N(0,0.1)$---e.g. $Z = X^3 + \text{tan}(Y) + N$.  We pick the type of the distribution of $X, Y$, as well as the function type, uniform at random.
	\item $X,Y$ are drawn from a standard Gaussian distribution, with $X \Indep Y$ and $Z = \text{sign}(X \cdot Y) \cdot \text{Exp}(1 / \sqrt{2})$.
	\item $X,Y \sim N(0,1)$ with $X \Indep Y$ and $Z = \text{sign}(X \cdot Y)$, where we randomly assign a $z \in \dZ$ to $10 \%$ of the values in $Z$ to make the function non-deterministic.
	\item $X \sim N(0,1)$, $Y~\sim \text{Poisson}(\lambda)$, with parameter $\lambda$ selected uniformly at random from $\{ 1,2,3 \}$. We generate $Z$ as $X$ modulo $Y$ and assign $10 \%$ of the data points randomly.
	\item $X,Y$ are unbiased coins. $Z' = X \oplus Y$, where $\oplus$ denotes the xor operator. From $Z'$ we calculate $Z$ as $N(0,0.1)$ if $Z'=0$ and $\text{Poisson}(5) \cdot N(0,0.1)$ under the condition that $Z' = 1$.
	\item We generate $X,Y$ and $Z'$ as above, but this time we generate $Z$ as $\text{Poisson}(5) + N(0,0.1)$ if $Z'=1$ and as $N(0,0.1)$ if $Z' = 0$.
\end{enumerate}
For each generating mechanism, including two purely continuous and four mixed mechanisms, we generate $100$ data sets and report the averaged results, separately for the continuous and mixed data, in Fig.~\ref{fig:collider} (top). On the continuous data, both of our approaches perform on par with RCIT and JIC for more than $400$ data points, whereas MIIC has a slightly better performance and RCoT is not able to detect the dependence for the sign function and hence has an accuracy of about $50 \%$. Since the functions for mixed data include an xor and the modulo operator, it is difficult to treat all discrete variables as ordinal and hence RCIT only reaches up to $80 \%$ accuracy---which is mostly due to an xor determining the scaling of a Gaussian distributed variable. On the other hand, both of our tests perform very well and only need $400$ samples to obtain an accuracy close to $100\%$. JIC and MIIC perform on par with our tests. \\
\\
\textbf{Non-Collider Structures} Similar to collider structures, there also exist non-collider structures of the form $X \to Z \to Y$ or $X \leftarrow Z \to Y$. In both cases, the ground truth is that $X \Indep Y \mid Z$. To simulate data according to these graphs, we consider two continuous mechanisms based on polynomial functions and two mixed generating mechanisms.
\begin{enumerate}
	\item $X \sim N(0,1)$, $Z$ is an additive noise function of $X$ and $Y$ is an additive noise function of $Z$. The functions can be polynomials up to degree three or the tangent function.
	\item $Z \sim N(0,1)$, $X$ and $Y$ are independent additive noise functions of $Z$, as defined above.
	\item $X,Y$ and $Z$ are generated as in Experiment IV.
	\item $X$ and $Y$ are generated according to Experiment II and $Z \sim N(\mu, x)$ for $X=x$ and $\mu \in [ -4,4 ]$.
\end{enumerate}
In essence, Fig.~\ref{fig:collider} (bottom) shows that both our tests obtain almost perfect accuracies for the continuous and mixed data, whereas RCIT and RCoT fail to detect up to $20\%$ of the independencies for continuous data, MIIC does not detect up to $11 \%$ and JIC seems to generally overestimate dependencies for those test cases. If we consider these results in comparison to the results for detecting dependencies for the collider setting, we suspect that both MIIC and JIC have a larger tendency to falsely detect dependencies, while our approach is more conservative and hence needs more samples to detect true dependencies.
\fi

\end{document}